\newcommand{\myarrow}[1][-45]{%
  \mathrel{%
    \text{$
     \begin{tikzpicture}[baseline = -0.5ex]
       \node[inner sep=0pt,outer sep=0pt,rotate = #1] (a) at (0,0)  {$\xrightarrow{}$};
    \end{tikzpicture}
    $}%
  }%
}%
\DeclareTextFontCommand{\code}{\fontfamily{pcr}\selectfont}
\DeclareTextFontCommand{\file}{\fontfamily{cmss}\selectfont}
\newtheorem{theorem}{Theorem}[section]
\newtheorem{lemma}{Lemma}[section]
\def\blfootnote{\gdef\@thefnmark{}\@footnotetext}
\def\alon#1{{\sc Alon says: }{\color{red}\sf**  #1 ** }} 
\def\jie#1{{\sc Jie says: }{\color{magenta}\sf #1}}
\def\alon#1{{}}\def\jie#1{{}}
\newcommand{\reals}{\mathbb{R}}
\newcommand{\mdef}[1]
{{\color{black}\sl\textbf{#1}}}
\renewcommand{\i}{\item}
\def\weight{\mbox{\sl weight}}
\def\lra{{\leftrightarrow}}
\def\M{{\mathcal{M}}}
\def\T{{\mathcal{T}}}
\def\G{{\mathcal{G}}}
\def\R{{\mathcal{R}}}
\def\F{{\mathcal{F}}}
\def\X{{\mathcal{X}}}
\def\BibTeX{{\rm B\kern-.05em{\sc i\kern-.025em b}\kern-.08em
    T\kern-.1667em\lower.7ex\hbox{E}\kern-.125emX}}
\begin{document}

%%
%% The "title" command has an optional parameter,
%% allowing the author to define a "short title" to be used in page headers.
\title{Data Inference from Encrypted  Databases: A Multi-dimensional Order-Preserving Matching Approach}

\author{\IEEEauthorblockN{Yanjun Pan}
\IEEEauthorblockA{\textit{University of Arizona} \\
Tucson, USA \\
yanjunpan@email.arizona.edu}
\and
\IEEEauthorblockN{Alon Efrat}
\IEEEauthorblockA{\textit{University of Arizona} \\
Tucson, USA \\
alon@cs.arizona.edu}
\and
\IEEEauthorblockN{Ming Li}
\IEEEauthorblockA{\textit{University of Arizona} \\
Tucson, USA \\
lim@email.arizona.edu}
\and
\IEEEauthorblockN{Boyang Wang}
\IEEEauthorblockA{\textit{University of Cincinnati}\\
Cincinnati, USA \\
boyang.wang@uc.edu}
\and
\IEEEauthorblockN{Hanyu Quan}
\IEEEauthorblockA{\textit{Huaqiao University}\\
Xiamen, China \\
quanhanyu@hqu.edu.cn}
\and
\IEEEauthorblockN{Joseph Mitchell}
\IEEEauthorblockA{\textit{Stony Brook University}\\
Stony Brook, USA \\
joseph.mitchell@stonybrook.edu}
\and
\IEEEauthorblockN{Jie Gao}
\IEEEauthorblockA{\textit{Stony Brook University}\\
Stony Brook, USA \\
jgao@cs.sunysb.edu}
\and
\IEEEauthorblockN{Esther Arkin}
\IEEEauthorblockA{\textit{Stony Brook University}\\
Stony Brook, USA \\
estie@ams.stonybrook.edu}
}

\maketitle
\begin{abstract}
Due to increasing concerns of data privacy, databases are being encrypted before they are stored on an untrusted server. To enable search operations on the encrypted data, searchable encryption techniques have been proposed. Representative schemes use order-preserving encryption (OPE) for supporting efficient Boolean queries on encrypted databases. Yet, recent works showed the possibility of inferring plaintext data from OPE-encrypted databases, merely using the order-preserving constraints, or combined with an auxiliary plaintext dataset with similar frequency distribution. So far, the effectiveness of such attacks is limited to single-dimensional dense data  (most values from the domain are encrypted), but it remains challenging to achieve it on high-dimensional datasets (e.g., spatial data) which are often sparse in nature. In this paper, for the first time, we study data inference attacks on multi-dimensional encrypted databases (with 2-D as a special case). We formulate it as a 2-D order-preserving matching problem and explore both unweighted and weighted cases, where the former maximizes the number of points matched using only order information and the latter further considers points with similar frequencies. 
We prove that the problem is NP-hard, and then propose a greedy algorithm, along with a polynomial-time algorithm with approximation guarantees. Experimental results on synthetic and real-world datasets show that the data recovery rate is significantly enhanced compared with the previous 1-D matching algorithm.
\end{abstract}

% ******************** Introduction ***********************
\section{Introduction}
Data outsourcing has become popular in recent years. Small businesses or individual users choose to delegate their data storage to public cloud servers (such as Amazon EC2 or Google Cloud) to save operational costs. Meanwhile, data breaches happen at an increasing rate, which compromise users' privacy.
%\Ming{Some most prominent examples include: the Yahoo breach [], etc }.
For instance, the Yahoo! data breaches reported in 2016 affected 3 billion user accounts \cite{yahoo}.
This is exacerbated by recent scandals of data misuse (such as the Facebook-Cambridge Analytica case \cite{facebook}),
%Apple using user images without consent []
which increases the level of distrust from users. To address this issue, end-to-end encryption is commonly adopted to encrypt the data before it is uploaded and stored on an untrusted server. In order to enable efficient utilization over encrypted data (such as answering queries), many cryptographic techniques called searchable encryption (SE) \cite{song2000practical,bellare2007deterministic,curtmola2011searchable} have been proposed. The main challenge for SE is to simultaneously provide flexible search functionality, high security assurance, and efficiency. Among existing SE schemes, Order-Preserving Encryption (OPE) \cite{boldyreva2009order,boldyreva2011order,popa2013ideal,kerschbaum2014optimal} has gained wide attention in the literature due to its high efficiency and functionality.  In particular, OPE uses symmetric key cryptography and preserves the numeric order of plaintext after encryption, which supports most Boolean queries such as range query. Well-known systems for encrypted database search using OPE include: CryptDB \cite{popa2011cryptdb}, Google Encrypted Bigquery Client \cite{Bigquery}, and Microsoft Always Encrypted Database \cite{Microsoft}. %MongoDB [], etc. 

Many early OPE schemes, unfortunately, were shown to leak more information beyond what is   necessary (i.e., the order between plaintexts). Therefore, schemes that satisfy ideal security guarantees (that only the order is leaked) have been proposed \cite{popa2013ideal, kerschbaum2014optimal}. However,  recent research \cite{NKW15, GSBNR17} showed that it is possible to infer/recover a significant portion of plaintexts from their OPE ciphertext, using only the ciphertext order relationships, as well as some auxiliary dataset with data frequencies similar to a target dataset. For example, Naveed et al. \cite{NKW15} attacked an encrypted medical database where users' age column is encrypted using OPE. Later, the attack was improved by Grubb et al. \cite{GSBNR17},   with an additional restriction of non-crossing property in the   matching algorithm.  
%CCS 15? attacked an encrypted medical database  where users' age column is encrypted using OPE [] and later the attack was improved by [][], that considers additional information ....   \Ming{Survey as many representative papers as possible, to the most state-of-the-art results}. 
%Another line of work \cite{KKNO16, LMP18, GLMP19} considers SE supporting range queries in general and studies data reconstruction attacks based on multiple queries and their revealed access pattern, which is orthogonal to this work. 
%query results/access patterns (retrieved records/files) [SP19][][], which is orthogonal to this work... 

We note that, to date, all the successful inference attacks against OPE are limited to one-dimensional data \cite{NKW15, GSBNR17}. That is, even though a database may have multiple numeric columns/dimensions, where each of them being encrypted by OPE, each of these columns are treated separately when they are matched with plaintext values. This works well for \textit{dense} data, i.e., where most of the values from the whole data domain have corresponding ciphertexts present in the database, such as age \cite{GSBNR17}. Intuitively, the denser the data is, the more effective the attack is, because more constraints imposed by the ciphertext order  reduces the uncertainty of their corresponding plaintext values. %\jie{the above sentence is a bit confusing.}
However, for multi-dimensional databases, applying such 1-D matching algorithms on each dimension separately can yield results far from  optimal, since it neglects that for each pair of data tuples the order-preserving constraints on all the dimensions must be held jointly, leading to a much larger search space than the actual one and therefore more ambiguity in matching.  
%intrinsic property of the data tuples across all dimensions. For example, consider a location point $p=(x, y)$. 
%Alon: I changed from $l$ to $p$ for consistancy. 
%
%Even though both the $x$ and $y$ coordinates can be frequent if they are perceived individually, $p$ can be a non-frequent location.
In addition, 
for higher dimensional data (such as spatial/location data), the data tuple tends to be increasingly sparsely distributed in the domain, which  invalidates the one-dimensional matching approach (unless  the ciphertext and known plaintext datasets are highly similar with each other). Therefore, we wonder whether it is still feasible to recover OPE-encrypted data tuples for multi-dimensional, sparse databases? This turns out to be a very challenging problem.

In this paper, we study data inference attacks against multi-dimensional encrypted databases by jointly considering all the dimensions and leveraging only the ciphertext tuples' order and frequency information, with the help of an auxiliary plaintext dataset with similar frequencies (the same assumption is adopted by many  previous works). We formulate the \emph{order-preserving matching problem} first in 2D but later extend it to 3D and higher dimensions. 
In the unweighted case, given 
an OPE-encrypted database and an auxiliary plaintext dataset, each containing a set of points in 2D, we maximize the number of points in a matching from the ciphertext to the plaintext, where order-preserving property must be simultaneously satisfied in both dimensions. Such a matching is called a non-conflicting matching in which the $x$/$y$ projection of one edge in the matching cannot contain the projection of another edge in the matching. In general we also consider point frequency (the number of records with the same value), points matched with a smaller frequency difference are given higher weights and we maximize the total weights of the matching. 

We show that our problem can also be formulated as an integer programming problem (ILP), and prove its NP-hardness by reducing it to sub-permutation pattern matching problem.  Then we propose a greedy algorithm, along with an approximation algorithm with $O(n^{2.5} \log^3 n )$ runtime and an approximation factor of $O(\sqrt n)$.  This algorithm exploits the geometric structure of the problem, which is based on the idea of finding jointly heaviest monotone sequences (i.e., sequence of points with either increasing or decreasing order on each dimension) inside the auxiliary and target datasets. The main contributions of this paper are summarized as follows:

(1) To the best of our knowledge, we are the first to study data inference attacks against  multi-dimensional OPE-encrypted databases by jointly considering all the dimensions simultaneously. We formulate a 2-D order-preserving matching problem and show its NP-hardness. 

(2) We design two 2-D order-preserving matching algorithms, including  a greedy  and a polynomial time  algorithm  with approximation guarantees. We consider both unweighted and weighted cases, with different weight functions. We further explore efficiency enhancement using tree-based data structures. We also discuss extensions to higher dimensions. These algorithms have independent interest beyond the applications in this paper. 

(3) We evaluate the efficiency and data recovery rate of our algorithms over both synthetic and  real-world datasets for different application scenarios, including location-based services, census data, and medical data. Our results suggest that when the ciphertext dataset is highly similar to  a subset of the plaintext dataset, the greedy min-conflict algorithm performs the best; but, in general, when these two datasets have arbitrary intersections and are less similar, our monotone matching algorithm performs better. %The performance also depends on how similar the two datasets are. 
Overall, the recovery rate of our 2-D algorithms significantly outperform single-dimensional matching algorithms when the data is sparse in each dimension.

% (3) We evaluate the efficiency and data recovery rate of our algorithms over both synthetic and three real-world datasets (for different application scenarios, including location-based service, census, and medical data). Our results suggest: in general the monotone matching algorithm and greedy min-conflict algorithm have similar performance. But the performance also depends on how similar the two datasets are. Overall, the recovery rate of our 2-D algorithms significantly outperform single-dimensional matching algorithms when the data is sparse in each dimension.

% ********************************Related Work*****************************
\section{Background and Related Work}

%\Ming{Beginning with background on OPE and attacks on OPE, detailing the one-D matching algorithm. Then talk about other related attacks for OPE and encrypted databases in general. Then the MIS}
%\Ming{: I think no need to present so much details about previous algorithms. Just say in high level what these works do, what problems has been looked at and how they solved it. What's their drawbacks and difference with our work. (or if we want, just introduce the one-D matching algorithm that we compare with in more details).}

%Boyang: I will move the details of OPE and the Leakage Abuse Attacks to some background information part next to the system and threat model. 
%\Ming{Make sure language is consistent, e.g., non-crossing v.s. non-conflicting matching?}
%subsection{Encrypted Databases}

\subsection{Order-Preserving Encryption}  Order-Preserving Encryption (OPE) \cite{popa2013ideal} is a special encryption, where the order of ciphertexts is consistent with the order of
plaintexts. For instance, assume there are two plaintexts $(m_{1}, m_{2})$ and their OPE are
ciphertexts $(\llbracket m_{1} \rrbracket, \llbracket m_{2} \rrbracket)$, where $\llbracket m_{i}\rrbracket$ is the encrypted version of $m_i$ by following the common notations in previous studies \cite{popa2013ideal, GSBNR17}. 
%using key $sk$, denoted as $\llbracket m_{i} \rrbracket \leftarrow{\mathsf{OPE.Enc}_{sk}(m_{i})}$. 
If $m_{1} < m_{2}$, then $\llbracket m_{1} \rrbracket < \llbracket m_{2} \rrbracket$. With such property, comparison and sorting could be performed on encrypted data directly,
without the need to access plaintext. \alon{Worth explaining why is desired (e.g. performing queries on server side. How about "For example, the data is stored on the cloud, and the user could ask the server to perform (efficient) range counting or range reporting. The server would be able to perform these queries (efficiently) though the exact locations of these points is not revealed to the server.  }  While some OPEs are probabilistic and only reveal the order of data items \cite{kerschbaum2014optimal}, probabilistic OPEs increase the ciphertext size or require client-size storage, which scale poorly on sparse data. 
Most efficient OPEs are deterministic, and thus also reveal the frequency of data items \cite{popa2013ideal}.  In this paper, we focus on inference attacks on deterministic OPEs.  
%the original. The formal security of OPE is definedng \textit{indistinguishability under Ordered Chosen-Plaintext Attacks} \cite{PLZ13}. %The details of this formal security definition can be found in \cite{PLZ13}. 
%There are three OPE \cite{popa2013ideal,kerschbaum2014optimal} algorithms in the literature: $\mathsf{KeyGen}, \mathsf{Enc}$ and
%$\mathsf{Dec}$. 

\subsection{Inference Attacks on OPE via 1-D Matching}
While the security of OPEs has been proved formally under Ordered Chosen-Plaintext Attacks \cite{popa2013ideal}, several studies propose inference attacks to evaluate the privacy leakage of OPE ciphertexts. For instance, Naveed, et al. \cite{NKW15} proposed an inference attack, named cumulative attack, on 1-D OPE by leveraging frequency leakage only. The authors address the attack by running the Hungarian algorithm.
Grubbs et al. designed \cite{GSBNR17} leakage abuse attacks on 1-D OPE ciphertexts. The authors utilize both frequency and order leakage, and formulate the attack as 
%a bipartite non-crossing matching problem, which in essence is 
a dynamic programming problem \cite{GSBNR17}. This leakage abuse attack performs faster than the cumulative attack and derives higher recovery rate. We briefly describe this leakage abuse attack below. 

Given an OPE-encrypted dataset $A=\{\llbracket a_{1} \rrbracket, \llbracket a_{2} \rrbracket, ..., \llbracket a_{n} \rrbracket\}$ and an unencrypted dataset $B=\{b_{1}, b_{2}, ..., b_{m}\}$ similar to $A$, an attacker tries to infer the plaintexts of $A$ without decrypting OPE ciphertexts, by leveraging the plaintexts of $B$ as well as the order and frequency information of $A$ and $B$.
Without loss of generality, the attack assumes that $A$ and $B$ are sorted, where $\llbracket a_{i} \rrbracket<\llbracket a_{j} \rrbracket$ for any $i<j$, and $b_{k}<b_{l}$ for any $k<l$. The attacker also assumes $n\leq{m}$. Let $F_{A}(\llbracket a \rrbracket)$ and $F_{B}(b)$ be the Cumulative Distribution Function (CDF) of the OPE ciphertexts of dataset $A$ and the plaintexts of dataset $B$ respectively. 
%With a non-crossing bipartite matching algorithm, an attacker 
Now, construct a bipartite graph $H$ on vertex set $A$, $B$, in which the weight of an edge between  vertex $\llbracket a_{i} \rrbracket$ and vertex $b_{j}$ is defined as
\begin{equation*}
w(\llbracket a_{i} \rrbracket, b_{j}) = \kappa - |F_{A}(\llbracket a_{i} \rrbracket) - F_{B}(b_{j})|
\end{equation*}
where $\kappa$ is a pre-defined parameter and can be any integer greater than 1.%\Ming{explain what is f.}

%\alon{I touched this papa (changes in blue)} 
The attacker finds a max-weight bipartite matching in $H$ that is (one-dimensional) order-preserving (i.e., a vertex early in $A$ is mapped to an early vertex in $B$).
%each origin vertex $\llbracket a_{i} \rrbracket$ has exactly one matching vertex in the destination set and  the vertex order along both lists is the same (order-preserving).  
Intuitively, suppose we plot the points of $A$ and $B$ on two parallel lines in their order. If we draw the edges in the matching, these edges could not cross. That is, if $\llbracket a_{i} \rrbracket$ and $b_{j}$ are matched, any vertex in $\llbracket a_{k}\rrbracket$ with $k<i$ cannot be matched with vertex $b_{\ell}$ with $\ell >j$. Therefore, such a matching is also called a non-crossing matching.
The max-weight non-crossing matching can be found in time $O(mn)$ via dynamic programming.  
If vertex $b_{j}$ is matched with vertex $\llbracket a_{i} \rrbracket$, this attacker infers $b_{j}$ as the plaintext of OPE ciphertext $\llbracket a_{i} \rrbracket$.   

%The running time of this attack is $O(mn)$, where $n$ is the number of origin vertices and $m$ is the number of destination vertices.

\subsection{Other Attacks on Encrypted Databases}

In addition to cumulative attacks and leakage abuse attacks, some other attacks have also been proposed against OPE. Durak et al.\ \cite{DDC16} proposed sort attacks on 2-D data encrypted by OPE. This attack performs a non-crossing matching on each dimension separately, and then improve the recovery results by evaluating inter-column correlation. Bindschaedler et al. \cite{BGCRS18} proposed an inference attack against property-preserving encryption on multi-dimensional data. This attack operates column by column. Specifically, it first recovers the column encrypted with the weakest encryption primitive, and then infers the next column encrypted by a stronger primitive by considering correlation. The attack is formulated as Bayesian inference problem. It also leverages \textit{record linkage} and \textit{machine learning} to infer columns that are strongly encrypted. In comparison, our proposed matching algorithms aim at optimally recover data tuples containing two or more dimensions as a whole. We utilize the order and frequencies of the 2-D tuples,  instead of single-dimension order and frequency in previous works.  In addition, we do not need explicit prior knowledge about the data correlations across dimensions within an encrypted dataset. 
%\Ming{Need to Mention the differences with our work...}

Finally, reconstruction attacks \cite{KKNO16, LMP18} recover plaintexts on any searchable encryption that support range queries. Different from inference attacks, a reconstruction attack does not require a similar dataset as a reference but recover data based on access pattern leakage from a large number of range queries. However, reconstruction attacks often assume range queries are uniformly distributed, except \cite{GLMP19}, which is based on statistical learning theory. These works are orthogonal to this work.   

%Many attacks have been proposed to reexamine the security of encrypted databases in practice. Naveed, et al. \cite{NKW15} proposed an inference attack by leveraging frequency leakage. They authors address the attack by solving a Hungarian algorithm. Grubbs et al. designed \cite{GSBNR17} leakage abuse attacks on data encrypted by Order Preserving Encryption. The authors utilize both frequency and order leakage, and formulate the attack as a bipartite non-crossing matching problem.
%\alon{As a simple dynamic programming problem??} Durak et al.\ \cite{DDC16} proposed sort attacks on two dimensional data encrypted by Order-Preserving Encryption. This attack performs a non-crossing matching on each dimension separately, and then improve the recovery results by evaluating inter-column correlation. Bindschaedler et al.\ \cite{BGCRS18} proposed an inference attack on multi-dimensional data. This attack operates column by column. Specifically, it first recovery the column encrypted with the weakest encryption primitive, and then infer the next column encrypted by a stronger primitive by considering correlation. The attack is formulated as Bayesian inference problem. It also leverage \textit{record linkage} and \textit{machine learning} to infer columns that are strongly encrypted.  

In this paper, we design two 2-D order-preserving matching algorithms that jointly consider the data ordering on 2D. We also extend the the 1-D matching algorithm in \cite{GSBNR17} to 2-D data for comparison. It turns out all the algorithms have advantages and limitations, as we describe in the evaluation and conclusion sections.

%*************************Model**************************************************

\section{Models and Objectives} 

\textbf{System Model.} In the system model, there are two entities, a \textit{client} and a \textit{server}. We assume that a client has a dataset (e.g., a location dataset) and needs to store it on the server. Due to privacy concerns, this client 
%does not fully trust the server with data privacy and 
will encrypt the  dataset before outsourcing it to the server. %We assume the server is \textit{honest-but-curious}, which is interested in revealing the client's spatial data but does not maliciously add, modify, or remove the client's data. 

We assume that the client encrypts the data using deterministic OPE, such that the server will be able to perform search operations (e.g., range queries)  over encrypted data without decryption. We assume that each dimension of the data is encrypted separately with OPE, such that search can be enabled for each dimension. The client's data set is denoted as $Q$ and its encrypted version as $\llbracket  Q\rrbracket$.

 \textbf{Threat Model.} We assume that the server is an \textit{honest-but-curious} attacker, who is interested in revealing the client's data but does not maliciously add, modify, or remove the client's data. In addition, we assume that the server is able to possess a similar dataset $P$  (in plaintext) as the client's dataset. In addition, we assume that $P$ and $Q$ have a significant common data points. 
 %can be \textit{arbitrary} (e.g., one contains another or intersects), but  the cardinality of their intersection shall be non-trivial. 
 For those points in $Q$ that are also contained in $P$, they have similar frequency distributions.  For example, $Q$ can be the location data from Uber users, and $P$ can be a USGS spatial database ($Q$ can be considered to be randomly sampled from $P$). Or $P$ and $Q$ can be two location check-in datasets from two different social networking apps with partially overlapping locations.    
 
\textbf{Objectives}.
%\Ming{maybe we can put this in evaluation sec. }
The attacker's goal is to   perform inference attacks  to maximally infer/recover the plaintext of encrypted database $\llbracket  Q\rrbracket$ without decryption, using only $\llbracket  Q\rrbracket$ and $P$ with the ciphertext/plaintext order, either with or without frequency of  points in both datasets. He aims at recovering the database points exactly. We define the \textit{recovery rate} as the primary metric to measure the privacy leakage of the inference attack.

\textbf{Recovery rate}: If an attacker infers $n$ points, $m'$ of which are correct inference (the same as their true plaintext points), 
%the inferred points of $m'$ of the encrypted points are within an (Euclidean) distance $d$ of their true plaintext points, 
then the recovery rate  is $m'/n$. In addition,   we   consider  both the \textit{unweighted} version of the above metrics, where each unique point/location is counted once, or the \textit{weighted} version where the frequency is considered as well (number of `copies' of the same point, e.g. the number of customers in a restaurant). The former can be regarded as ``point-level'' and the latter is ``record-level''. Intuitively, to maximize   the weighted recovery rate, the points with larger frequencies should be correctly matched with high priority. %\Ming{For these 4 metrics, Need more specific definition in formulas. Either here or in evaluation.}%Note that, this is different from 

% For a 2-dimensional encrypted point, we say an attack is correct, if an attack can infer both $x$ and $y$ correctly. 

%\Ming{Please be consistent with evaluation sec. We also use approximate recovery ratio (percentage of matched  points that are within certain distance from true plaintext points) beside the exact one. We also need to integrate this problem definition with IV below:   is maximizing the recovery ratio equivalent to max-weight matching? Why do we formulate max-weight matching problem? --- it depends on the correlation of the two datasets: when the weight is the difference between the frequencies of two matched points, and the frequencies of the same points are close (or have similar ratios) in either dataset, max-weight matching will likely   maximize recovery ratio.}
%Besides accuracy, we also harness distance to complement the privacy metric. 

% ********************************Problem Statement***************************

\section{2-D Order-Preserving Matching}
%\Ming{I think it's better to state the attack's objective first (recover as many points as possible using order and both datasets' frequency info. Then connect to the matching problem below. Why do we formulate like this? Move some parts from above sec to here.}
% Joe made edits and additions below

We formulate an  order-preserving matching problem in two dimensions. Let $P$ and $Q$ be two finite sets of points in the plane. $P=\{p_1, p_2, \cdots, p_n\}$ and $Q=\{q_1, q_2, \cdots, q_m\}$.
If $p\in P$ is matched to $q\in Q$, we denote it as an edge $(p, q)$ and sometimes also denoted as \textbf{}$p\lra q$.  
We say that a matching $\M$ between $P$ and $Q$ is {\em order preserving} if there exist two monotone functions $\psi, \phi$ such that if $(p,q)\in\M  $ (for $p\in P, q\in Q$) then $q.x=\psi(p.x)$, $q.y=\phi(p.y)$.  

It is convenient to consider an alternative, equivalent way to define order preserving, in terms of ``conflicts''. We say that two edges $(p,q)\in \M $ and $(p',q')\in \M$ are {\em in $x$-conflict with each other} if the $x$-projection (interval) of one edge contains the $x$-projection (interval) of the other edge; the notion of being in {\em $y$-conflict} is defined similarly.
%, i.e., if $p.x <p'.x$ and $q.x>q'.x$, or if $p'.x <p.x$ and $q'.x>q.x$. In particular, $p.x=p'.x$ iff  $q.x=q'.x$  
% We define $y$-conflict in an analogous way. 
We say that a matching $\M$ is a {\em non-conflicting matching} of $P$ and $Q$ if it does not contain any $x$-conflicting or $y$-conflicting pair of edges. 
From the definitions, it is easy to see that a matching $\M$ is order preserving if and only if it is a non-conflicting matching.
% {\color{brown} We also refer to such a matching as an {\bf order preserving matching}, since it preserves both $x-$ and $y-$ directions}
%\joe{So, we have two definitions: "order preserving" matching, and "non-conflicting matching".  They are equivalent, right?  In which case we should include a statement to this effect:  A matching $\M$ is order preserving if and only if it is non-conflicting.}\jie{agree, I also remember seeing non-crossing matching. We should be consistent}\alon{see addition in last sentence} 
%\joe{I took another stab at rephrasing this paragraph. Please check!}
\vspace{-15pt}
\begin{figure}[htbp]
\begin{center}
 \includegraphics[scale=0.5]{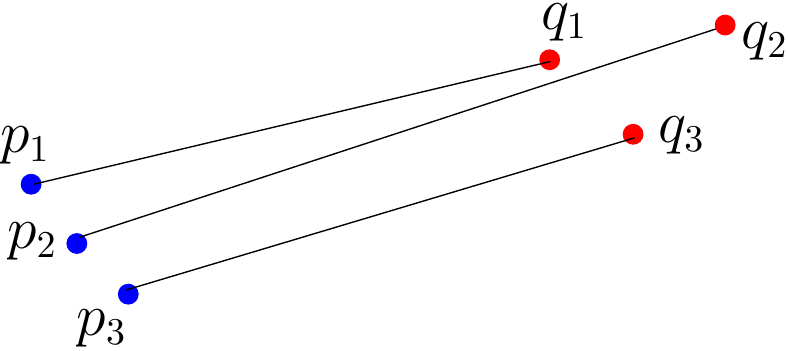} 
 \vspace{-10pt}
 \caption{\sl In this matching $\M$ between $P=\{p_1,p_2,p_3\}$ and $Q=\{q_1,q_2,q_3\}$, the edge $(p_2,q_2)$ is in $y$-conflict with edge $(p_1,q_1)$ and in $x$-conflict with edge $(p_3,q_3)$.
 \label{fig:3np}}
\end{center}
\vspace{-15pt}
\end{figure}

We say that a point $p'$ {\em dominates } $p''$, and write $p''\prec p'$,  if either (i) $p''.x< p'.x$, or (ii) $p''.x=  p'.x \mbox{~and~} p'.y<  p''.y$.
With this notation, two pairs $(p_i, q_j)$, $(p_{i'}, q_{j'})$ with $p_{i'}\prec p_i$ but $q_j \prec q_{j'}$ are in \emph{conflict}.

% Given $P$, we say that a sequence  $(p_1, p_2,\ldots,p_k)$ is a \mdef{monotone increasing sequence} if  $p_i.x\leq p_{i+1}.x$ and $p_i.y\leq p_{i+1}.y$, for all $i$. Then, a subset $P'\subseteq P$ is said to be a monotone increasing subset if the sequence obtained by ordering $P'$ by $x$-coordinates is a monotone increasing sequence. Analogously we define sequences and subsets that are \mdef{monotone decreasing}. 

% In Section \ref{sec:33} we suggest algorithms that, given $P$ and $Q$, find a monotone sequence $P' \subseteq P$ and a non-conflicting matching of $P'$ to a subset $Q'\subseteq Q$. This matching maximizes the weight of any non-conflicting   matching, and any monotone subsets $P'\subseteq P$ and $Q'\subseteq Q$. 
 
\subsection{Unweighted v.s. Weighted Version} \label{sec:weights_functions}
% \Ming{Let's only keep the ones that we evaluated in the experiments? Also, some of the weight functions give similar results so we may keep the better ones.}

In this paper we study the problem of finding a maximum cardinality, or a maximum-weight  order preserving matching. 

In the unweighted version, we maximize the number of edges in a non-conflict matching between $P$ and $Q$. This formulation does not use information on data frequencies. 

To incorporate knowledge on data frequencies from $P$ and $Q$, we can define weight of matching a point in $P$ with a point in $Q$ and ask for the non-conflict matching with maximum weight. The goal is to minimize the total difference of the frequencies between each ciphertext and its matched plaintext points. Note that this may or may not be equivalent with the objective of maximizing the recovery rate. This depends on the similarity of the two datasets $P$ and $Q$: when the  frequencies of the same points are close in either dataset, max-weight matching will likely maximize recovery rate.

There are several possible choices of weight function. Assume $f(p_i)$, $f(q_j)$ are the frequencies of locations (resp.\ ) $p_i\in P , q_j\in Q$.  Then the weight of matching $p_i$ to $q_j$ could be one of the following weight function:   
\begin{enumerate}
% \i  We expect $\w(p_i, q_j)$ to decrease  as the difference between $f(p_j)$ and $f(p_j)$ increases. At the same time if a successful matching suggest a security leak in the identity of  all the items at $p_i$ and $q_j,$ then $w(p_i, q_j)$ should increase when the combine cardinality    of the sets  increases.  Assume WLOG that $f(p_i)\geq f(q_j)$ (otherwise swap). Then  a natural families of function could be 

% $$ \color{blue}
% \weight(p_i\lra q_j)= \bigg( 1-|   f(p_i) -  f(q_j)|  \bigg)  ^{\Large \alpha }\cdot \bigg(1+ min\{  f(p_i),   f(q_j)\}\bigg)^{\LARGE  \beta}  
% $$
% For $\alpha, \beta\geq 0$ constants.  

% \i
% \begin{equation*} \textcolor{blue}{ \weight(p_i\lra q_j)=\log \left\{ \frac { \max\{f_i, f_j\}}
% {1+ | f_i-f_j | } \right\} }
% \end{equation*}
%And we will be looking for a max-weight matching 

% \i   
% %  The cost function is (if we match $p_\lra q_j$) 
%   \begin{equation*}\label{eq:2}
%   \weight(p_i\lra q_j)= f_i+f_j - {\frac {|f_i-f_j|}{2}}  
%  \end{equation*} 
% So in this equation   we are earning  2\$   for each item we match, but pay a small penalty for matching items that have different frequencies. 

\def\weight{{\mbox{\sl weight}}}

\item \label{weight1}
%\begin{equation*}\label{eq:2}\color{blue}
$ weight(p_i\lra q_j)= 
\min\{  f(p_i), f(q_i) \} $. 
The rational for this weight function is that if we consider  $f(p_i)$ and $f(q_j)$ as indicating the normalized number of items at point $p_i$ and $q_j$, then  $\min\{  f(p_i), f(q_i) \}$  indicates the maximum number of items could be matched.  
% \end{equation*} 

% \i This will require us to set a threshold $\rho$. The weight of the matching uses any of the functions above, (e.g.\   $\min\{ f_i, f_j\}$ )  but we only consider possible matches if $|f_i-f_j|<\rho$.   
% \begin{align*}
% weight(p_i\lra q_j)=   
% \left\{ \begin{array}{cc} 
%                 0 & \hspace{5mm} |f_i-f_j| > \rho   \\
%                  | f_i-f_j| & \mbox{Otherwise} 
%                 \end{array} \right.
% \end{align*}

\item \label{weight2}

%\sum_{(p_i, q_j)\in \M } =
$\weight(p_i\lra q_j)=  \kappa- |f(p_j)-f(q_j)|$
, where $\kappa$ is a manually-picked  constant, usually as maximum of all $f(q_i)$ and $f(q_j).$
This is the cost function used in \cite{GSBNR17}. %We pick $\kappa$ to be a little bigger than the maximum frequency difference. 
% \i
% Finally, consider a model where each city appears as a single location (e.g. a city), all people in this city are in the planetext (e.g. USGS database), and each person is picked to be in the encrypted data with a fixed probability  $\alpha$ independently of each other. 
% Then by Chernoff bound, $Pr(|\alpha f_i-   q_j|> \mu f_i)\leq e^{-zz}$
\end{enumerate}

\newcommand{\commentout}[1]{{}{\color{red}\tt something was here }}

\subsection{Integer Programming Formulation}
%\alon{Should move to the Algorithms. No?} 

Given two sets of points, $P$ and $Q$,  we define a variable $x_{ij}$ that takes value $1$ if $p_i\lra q_j$ and $0$ otherwise. Now, we can formulate our matching problem as follows:
\begin{align*}
\mbox{\bf Maximize} 
&\sum_{i, j}  x_{ij} \cdot  w(p_i \lra q_j) \\
\mbox{\bf Subject to} &\sum_{j}x_{ij}\leq 1, \quad \forall i \\
&\sum_{i}x_{ij}\leq 1,\quad \forall j \\
& x_{ij}+x_{i'j'} \leq 1, \quad \forall (i,j), (i',j'),  \nonumber \\ 
 & \mbox{\textbf{s.\ t.\ } $(p_i,q_j)$ is in conflict with  $(p_{i'}, q_{j'})$.}
\end{align*}

The first two constraints imply that one point can only be matched to one other point. The last inequality is the non-conflicting constraint.
%says that if two pairs $(p_i, q_j)$, $(p_{i'}, q_{j'})$ are \emph{conflicted}, i.e. $p_{i'}\prec p_i$ but $q_j \prec q_{j'}$, then we cannot match both $(p_i, q_j)$ and $(p_{i'}, q_{j'})$.
    
% {\bf Q; Were we able to show anything about the half-integrally? } \alon{No, we should probably move it to the discussions. }  
% %
% By Lemma 14.4 in [Vaz], the solution to this integer programming problem is half-integral, i.e., the variables in the optimal solution can only take values $\{0, 1/2, 1\}$.     

\def\Gconf{{G_{\mbox{\footnotesize Conf}}}}
\def\Econf{{E_{\mbox{\footnotesize Conf}}}}
\def\Gdom{{G_{\mbox{\footnotesize Dom}}}}

% \joe{Joe: Where is "Dominance Graph" defined or used?}\alon{I think that i only use it in the mono-sequences subsection. I define it then} 
 
% \begin{equation}
%   \Econf= \left\{ (u,v)\in \Econf \mbox{$u$ conflicts $v$} \right\}
% \end{equation}

\def\MM{{MAX\_WEIGHT}}

\subsection{Related Results on Maximum Independent Sets}
%{\color{blue} (FROM JOE EMAIL {\bf Related results } 
%\Ming{Maybe this is better put after our problem formulation?}
Our problem can be phrased as a (weighted) maximum independent set (MIS) problem, in the conflict graph, defined below. 

\textit{Conflict Graph $\Gconf(P\times Q, \Econf)$}: % and Dominance Graph
%Given sets $P$ and $Q$, we define the {\em conflict graph}, denoted $\Gconf(P\times Q, \Econf)$, to be 
the graph whose nodes are pairs of potentially matched points, one from $P$ and one from $Q$, and whose edges represent the conflict relationship: $(u,v)\in \Econf$ if the matched point pair $u\in P\times Q$ is in conflict with the matched point pair $v\in P\times Q$.

Unfortunately, this graph in our settings is enormous, and its node set has cardinality quadratic in the size of the input. Thus, pursuing our problem as a maximum independent set problem is likely impractical. 
In general, MIS has no polynomial-time constant factor approximation algorithm (unless $P = NP$); in fact, MIS, in general, is Poly-APX-complete, meaning it is as hard as any problem that can cannot be approximated within a polynomial factor \cite{BazganEP05}. However, there are efficient approximation algorithms for restricted classes of graphs.
In planar graphs, MIS can be approximated to within any approximation ratio $c < 1$ in polynomial time; MIS also has a polynomial-time approximation scheme in any family of graphs closed under taking minors~\cite{grohe2003local}.
In bounded degree graphs, effective approximation algorithms are known with approximation ratios that are constant for a fixed value of the maximum degree; for instance, a greedy algorithm that forms a maximal independent set by, at each step, choosing a minimum-degree vertex in the graph and removing its neighbors, achieves an approximation ratio of $(\Delta +2)/3$ on graphs with maximum degree $\Delta $ \cite{halldorsson1997greed}; hardness of approximation for such instances is also known~\cite{berman1999some},
% Approximation hardness bounds for such instances were proven in Berman \& Karpinski (1999) \cite{berman1999some}. 
and MIS on 3-regular 3-edge-colorable graphs is APX-complete~\cite{bazgan2005completeness}.

%***************************Complexity****************************************************
\section{NP-Hardness}
% \joe{Joe and Estie rephrased, compressed the NP-hardness section.  We believe it is "obvious" the equivalence between PMPP and order preserving matchings. Later, if needed, we can help rewrite a compact formal proof (like Alon had written, as an if and only if justification from the definition).}
The problem of finding a maximum-cardinality order preserving matching (i.e., the unweighted case) is NP-hard. Therefore the weighted setting is also NP-hard.

We establish this by using a reduction from the problem {\sc Pattern Matching Problem for Permutations} (PMPP)~\cite{Bose1998}, which asks the following: Given a permutation $T=(t_1,t_2,\ldots,t_n)$ of the sequence $(1,2,\ldots,n)$ and a permutation $S=(s_1,s_2,\ldots,s_k)$ of the sequence $(1,2,\ldots,k)$, for $k\leq n$, determine if there exists a subsequence, $T'=(t_{i_1},t_{i_2},\ldots,t_{i_k})$, of $T$ of length $k$ (with $i_1<i_2<\cdots < i_k$) such that the elements of $T'$ are ordered according to the permutation $S$, i.e., such that $t_{i_j} < t_{i_{j'}}$ if and only if $s_{j} < s_{j'}$. We map a PMPP input pair of permutations, $(T,S)$, to a pair of points, $(P,Q)$, in the plane: Specifically, $P$ is the set $\{(i,t_i): 1\leq i\leq n\}$ of $n$ points corresponding to the permutation $T$, and $Q$ is the set $\{(i,s_i): 1\leq i\leq k\}$ of $k$ points corresponding to the permutation $S$. 
It now follows from the definition of an order preserving matching, and the specification of the PMPP, that there exists an order preserving matching of size $k$ between $P$ and $Q$ if and only if there is a subsequence $T'$ of $T$ of length $k$ such that the elements of $T'$ are ordered according to the permutation $S$. It follows that our (unweighted) order preserving matching problem is NP-hard.

\begin{theorem}
Given two point sets $P,Q \subset \reals ^2$, it is NP-complete to decide if there exists an order preserving matching $\M$ of cardinality $\min\{|P|,|Q|\}$ between $P$ and $Q$.    
\end{theorem}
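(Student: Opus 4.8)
The plan is to show the problem lies in NP and is NP-hard; together these give NP-completeness. Membership in NP is immediate: any order preserving matching $\M$ has at most $\min\{|P|,|Q|\}$ edges, so it is a polynomial-size certificate, and given $\M$ we verify in polynomial time that it is a legal matching (each point of $P$ and each point of $Q$ is incident to at most one edge), that $|\M|=\min\{|P|,|Q|\}$, and that it is non-conflicting, by testing each of the $O(|\M|^2)$ pairs of edges for $x$- or $y$-containment of projections in constant time; by the equivalence noted earlier (order preserving if and only if non-conflicting), this certifies a yes-instance.

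For hardness I would use the reduction from {\sc Pattern Matching Problem for Permutations} (PMPP) sketched above, which is NP-complete \cite{Bose1998}. From an instance $(T,S)$ with $T$ a permutation of $(1,\dots,n)$ and $S$ a permutation of $(1,\dots,k)$, $k\le n$, build $P=\{(i,t_i):1\le i\le n\}$ and $Q=\{(i,s_i):1\le i\le k\}$; this takes $O(n)$ time, both sets are in general position (distinct $x$-coordinates, distinct $y$-coordinates), and $|Q|=k=\min\{|P|,|Q|\}$, so the decision question becomes whether some order preserving matching saturates $Q$. The heart is the equivalence with PMPP. In one direction, if indices $i_1<\dots<i_k$ satisfy that $(t_{i_1},\dots,t_{i_k})$ is order-isomorphic to $(s_1,\dots,s_k)$, then matching $q_j=(j,s_j)$ to $p_{i_j}=(i_j,t_{i_j})$ is order preserving: $j\mapsto i_j$ increasing gives a monotone $\psi$ on the $x$-coordinates and the order-isomorphism gives a monotone $\phi$ on the $y$-coordinates, and the matching has size $k$. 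In the other direction, an order preserving matching of size $k$ saturates $Q$; writing $q_j\leftrightarrow p_{i_j}$, the monotone $x$-map forces $i_1<\dots<i_k$ and the monotone $y$-map forces $(t_{i_j})_j$ to be ordered according to $(s_j)_j$, so $T'=(t_{i_1},\dots,t_{i_k})$ is the required subsequence. With membership in NP, this gives NP-completeness.

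The reduction is short, so the one point requiring genuine care — the main obstacle — is the meaning of ``monotone'' in the definition of an order preserving matching. For the application to OPE it is monotone \emph{increasing} (OPE preserves, rather than reverses, order), and under that reading the equivalence above is exact: ``no $x$-conflict and no $y$-conflict'' says precisely that the selected subsequence of $T$ is order-isomorphic to $S$. If one also allows decreasing $\psi$ or $\phi$, a saturating matching instead witnesses an occurrence in $T$ of $S$ or of one of its reversal, complement, or reverse-complement, and exactness is recovered by a routine preprocessing step that appends short asymmetric padding to both $T$ and $S$ so that only the un-reversed, un-complemented copy of $S$ can occur. All remaining details — the polynomial running time and the certificate checks — are routine.
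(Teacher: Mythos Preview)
Your proposal is correct and follows essentially the same approach as the paper: the same reduction from PMPP with $P=\{(i,t_i)\}$ and $Q=\{(i,s_i)\}$, and the same order-isomorphism equivalence. You add two things the paper leaves implicit --- the explicit NP membership argument and the discussion of whether ``monotone'' allows decreasing maps --- both of which are sound and welcome, though in the paper's intended (increasing) reading the padding step is unnecessary.
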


\newcommand{\old}[1]{{}}

\old{For this, we utilize some results about pattern matching and permutations.  

A {\em permutation} $\sigma$ of $1,\ldots,n$ maps each integer $1\leq i\leq n$ to a unique index $1\leq \sigma(i)\leq n$. It is sometimes helpful to visualize this permutation in the form 
  $$\sigma=\begin{bmatrix}
1 & 2 & \cdots &n \\ 
\sigma_1 & \sigma_2 & \cdots& \sigma_n 
\end{bmatrix}$$

Since the symbols $P, Q$ and $T$ are used elsewhere in this paper, we will use the 'hat' notation exclusively to denote permutations. Bose et al.\ \cite{Bose1998} defined the 
{\em subpermutation pattern matching problem} as follows: We are given a permutation $\hat T = (\hat t_1,\hat t_2,.. , \hat t_n)$ of $1 \mbox{ to  }  n$, which we call the {\em text}, and a permutation 
$\hat Q = (\hat q_1 , \hat q_2, \ldots ,\hat \hat q_k )$  of $1,\ldots, k$, $k < n$, which we call the {\em pattern}. We wish to know whether there is a length $k$ subsequence of $T$, say $T’ = (\hat{t_{i_1}},\hat{t_{i_2}}, \hat{t_{i_k}}) $, with $ i_l < i_2 < i_k$   such that the elements of $\hat{T’}$ are ordered according to the permutation $\hat Q$ 
%\Ming{Do you mean Q?}-- 
% Alon: Yes. Thanks 
i.e., $ \hat{t_{i_r}} <\hat{ t_{i_s}}$, iff $\hat{p_r} < \hat{p_s}$. If $\hat{T}$ does contain such a subsequence, we will say
that $\hat{T}$ {\em contains}  $\hat{Q}$, or that $\hat{Q}$ {\em matches} into $\hat{T}$.
% \alon{I am doing all the stupid  hats things since $Q$ is used. Should I just use $A=\{a_1,\ldots, a_n\}, B=\{b_1,\ldots, b_k\}$ ??}

\begin{theorem}{\bf From Bose\cite{Bose1998}}
The subpermutation-pattern matching problem for permutations is NP-complete.
\end{theorem} 

\def\reals{\mathbb{R}}
\def\A{{\cal A}}

% Is the permutation that takes $1\leftrightarrow   n$, $2\leftrightarrow n-1$ etc. We would write $\sigma(i)$ to denote the index $i$ is mapped to. (in this example, $\sigma(1)=4$ and $\sigma(2)=3$)

% Let $\sigma$ be  a permutation of $1..n.$

\def\P{{\cal P}}
We next show that the existence of a polynomial-time algorithm for the order-preserving matching problem would imply a polynomial-time algorithm for the subpermutations-matching problem. 
Given a permutation $\sigma,$ we create the set of points  $\P(\sigma)\subset\reals^2,$ as follows: $\{(i,\sigma(i)) ~|~ i=1..n\}$, i.e., assign to each value $i$ the point whose $x-$value is $i$ and whose $y-$value is $\sigma(i).$ 

Note that to identify a permutation $\sigma$, it is enough to specify for every $1\leq i < j \leq n$ whether $\sigma(i)< \sigma(j) $ or $\sigma(i)>\sigma(j)$. We need the following definition and result (from Bose et al.\ \cite{Bose1998})

 \begin{figure*}[htbp]
  \begin{center}
\includegraphics[scale=0.85]{2permutations2235.pdf} 
    \caption{Example of sets of points $\hat Q$ and $\hat T$ in $\reals^2$, representing the permutation $T$ and a permutation $Q$, which in this example, is a sub-permutation of $T$.       \label{fig1}}
     \end{center}
 \end{figure*}
 \def\inplane{{\subseteq \reals^2}}

Next, let $\hat Q$ be a permutation of $(1, \ldots, k)$ and $\hat T$ be a permutation of $(1, \ldots, n)$. Let $\hat Q = \mathcal{A}(Q) \subseteq \mathbb{R}^2$ and $T = \mathcal{A}(\hat T) \subseteq \mathbb{R}^2$ be the corresponding  point-sets which are the realization of these permutations, where $\mathcal{A}(\hat Q) = \{(i,\hat q_i), \hat q_i \in\hat  Q, i = 1, \ldots, k\}$, $\mathcal{A}(\hat T) = \{(i,\hat t_i), \hat t_i \in \hat T, i = 1, \ldots, n\}$. 

\begin{lemma} The following two conditions are equivalent 
\begin{enumerate}
    \item The permutation $\hat Q$ is a sub-permutation of $\hat T$.
    \item there is an \textbf{order-preserving} matching $f()$ that matches each point  $q_i\in \mathcal{A}(\hat Q)$ to a unique  point  $f(q_i)\in \mathcal{A}(\hat T)$.
\end{enumerate}
\end{lemma}

\begin{proof}
$\Rightarrow$  Assume that $\hat Q = (\hat q_1, \ldots, \hat q_k)$ is sub-permutation of $T = (\hat t_1, \ldots, \hat t_n)$, and let $i_1 < i_2 < \cdots < i_k$ be the index involved in the permutation in $\hat T$. By construction, we have:
\begin{align*}
    \mathcal{A}(\hat Q) & = \{(1,\hat q_1), \ldots, (k,\hat q_k)\}\\
    \mathcal{A}(\hat T) & = \{(1,\hat t_1), \ldots, (k,\hat t_n)\}
\end{align*}

Let $f$ be the matching that maps $(i,\hat q_i) \in \mathcal{A}(Q)$ to $(i_i,\hat t_{i_i}) \in \mathcal{A}(T)$. Then $f$ is an order-preserving matching: for any two matchings $(i,\hat q_i) \rightarrow (i_i,\hat t_{i_i})$ and $(j,\hat q_j) \rightarrow (i_j,\hat t_{i_j})$, with the definition of sub-permutation: 1) in $y$-axis: if $\hat q_i < \hat q_j$ then $\hat t_{i_i} < \hat t_{i_j}$ and vice versa, thus the order is preserved; 2) in $x$-axis:  if $i < j$, then $i_i < i_j$ and vice versa, similarly the order is preserved as well.

$\Leftarrow$ Assume that there is an order-preserving matching $f$ that matches each $(i,\hat q_i)\in \mathcal{A}(Q)$ to a unique  point  $(i_i,\hat t_{i_i})\in \mathcal{A}(\hat T)$. W.l.o.g., for any two matchings $(i,\hat q_i) \rightarrow (i_i,\hat t_{i_i})$ and $(j,\hat q_j) \rightarrow (i_j,\hat t_{i_j})$, we assume that $i < j$ and $\hat q_i > \hat q_j$. Then with the definition of order-preserving matching, we have $i_i < i_j$, $\hat t_{i_i} > \hat t_{i_j}$ correspondingly. Namely that the elements of $T' = (\hat t_{i_1}, \ldots, \hat t_{i_k})$ are ordered according to $\hat Q$, which is the definition of pattern matching for permutation problem itself. Thus, the permutation $\hat Q$ is a sub-permutation of $\hat T$ and with corresponding subsequence as $(\hat t_{i_1}, \ldots, \hat t_{i_k})$.
\end{proof}

Hence we have proven 

\begin{theorem}
Given two point sets $ T,  Q\subset \reals ^2$,  it is an NP-hard problem to determine whether there is an order preserving matching $\M$ that matches a point of $T$ to every point of $Q$.    
\end{theorem}
%\alon{Something strange with English}

}  %end of \old

%*********************************Matching Algorithms**********************************
\section{Algorithms }

\subsection{Greedy Minimum-Conflict Matching}\label{sec:gmv} 
%\alon{Bad name. Pls find a better one (note that we have another greedy alg) }  
% \subsubsection{Unweighted case} 

In this heuristic, we create an order preserving  matching $\M\subseteq P\times Q$ in a greedy manner. We start with $\M$ empty, and at each iteration we add to $\M$ the edge that has the minimum number of conflicted edges among all potential future edges that could be selected. 
% Intuitively this edge is picked so it minimizes the violation to future edges. 
This heuristic is reminiscent of the minimum-degree heuristic of Halld{\'o}rsson and Radhakrishnan  \cite{halldorsson1997greed} that shows that similar heuristics provide a $(\Delta+2)/3$ approximation for finding a maximum independent set in graphs having maximum degree $\Delta$; 
%\Ming{I have a naive question: so why cannot we prove a similar bound for our problem?}.
however, in our setting, $\Delta$ might be $\Omega(|P||Q|)$, making this bound uninteresting. 

Formally, define for $p\in P, q\in Q$
\vspace{-8pt}
   \begin{align*}%\scriptstyle
   s(p,q)=\sum \{  w(p',q') \Big  | (p',q')   \mbox{ conflicts with } (p,q)  \mbox{ but } \nonumber \\ 
   \mbox{ not with edges currently in } \M \}  
\end{align*}    
and greedily select $(p^*,q^*)$ to minimize $s(p,q).$ A straightforward algorithm computes $s(p,q)$ directly (in time $O(n^2)$) for each of the $O(n^2)$ candidate edges $(p,q)$, in order to select each edge to be greedily added to $\M$. Overall, this is $O(n^4)$. %  We propose  this value for comparison to the other methods.

%Assume we consider the matching edge $p_i,q_j.$
%For this we need to know for and  $q_i$ and any $x$ we need 
%$$\{ q_r\in Q \| w(q) \geq w_i \  and \ X(q)\leq q\}
%$$
%\section{Fast computation of min-conflict edge} 

\subsubsection{Unweighted Case}
Here, to expedite the algorithm to avoid the time $O(n^4)$ (per edge selected), we propose %two ways to expedite this processing.  
a weighted random sampling approach. %Assume $f(p_i)=f(q_j)=1, \forall i,j$,
We could find $(p^*,q^*)$ in amortized time $O(1)$ per pair $(p_i,q_j)$. 
%This is done as followed: in $O(mn)$ time the pair $p_i\leftrightarrow q_j$ that conflicts the minimum number of other (potential)  matching edges.  
This is done in two steps: We first compute for each $p_i$ the number $n^{\myarrow[135]}(p_i)$  of point $p\in P$ above and to the left of $p_i$.
\def\snearrow{{\footnotesize \nearrow}}
% \begin{figure}[htbp]
% \begin{center}
%      \includegraphics[scale=0.5]{PandQ.pdf} 
%     \caption{   $n^{\nwarrow}(p_i)=2$ and $n^{\swarrow}(p_i)=4$    \label{fig:2np}}
%     \end{center}
% \end{figure}
% %In Figure~\ref{fig:2np}, $n^{\nwarrow}(p_i)=2$ in this example.  
%
Similarly we define 
%\begin{enumerate} 
 $n^{\myarrow[45]}(p_i)$, $n^{\myarrow[-45]}(p_i)$, $n^{\myarrow}(p_i) $ and  $m^{\myarrow[45]}(q_j)$, $m^{\myarrow}(q_j)$, $m^{\myarrow}(q_j)$.  
%\end{enumerate} 
Then the number of matching edges that are in conflict with $(p_i, q_j)$ can be computed by evaluating  the  products $n^\square(p_i)\cdot m^\square (q_j)  $,  where $\square$ is one of the 4 directions
$\myarrow[45],\myarrow[135],  \myarrow[-45],\myarrow[-135]. $ As easily observed, the number of conflicts is
\begin{align*}
s(p_i,q_j) & =
n^{\myarrow[135]}(p_i)m^{\myarrow[45]}(q_j  )+
n^{\myarrow[45]}(p_i) m^{\myarrow[135]}(q_j  ) \\
& \quad +n^{\myarrow[-135]}(p_i)m^{\myarrow[-45]}(q_j  )+
n^{\myarrow[-45]}(p_i) m^{\myarrow[-135]}(q_j  ) \\
& =nm-\sum_{\square \in \myarrow[135], \myarrow[45],\myarrow[-135],\myarrow[-45]} {m_i^\square} {n_i^\square} 
\end{align*}
%,  \nwarrow, \swarrow, \searrow,\nearrow$. 
%of 
%$p_i\leftrightarrow q_j$ is given by the simple formula   

%
% \begin{align*}
% \label{eq:13}%\scriptstyle
% s(p_i,q_j)=
% &n^{\nwarrow}(p_i)m^{\nearrow}(q_j  )+
% %
% n^{\nearrow}(p_i) m^{\nwarrow}(q_j  ) \nonumber\\
% %
% &+n^{\swarrow}(p_i)m^{\searrow}(q_j  )+
% %
% n^{\searrow}(p_i) m^{\swarrow}(q_j  )  \nonumber\\
% &=nm-\sum_{\square \in \nwarrow, \nearrow,\swarrow,\searrow} {m_i^\square} {n_i^\square} 
% % 
% \end{align*}

%\begin{align*}
%\label{eq:13}%\scriptstyle
% $s(p_i,q_j)=
% n^{\nwarrow}(p_i)m^{\snearrow}(q_j  )+
% %
% n^{\nearrow}(p_i) m^{\nwarrow}(q_j  ) \nonumber\\
% %
% +n^{\swarrow}(p_i)m^{\searrow}(q_j  )+
% %
% n^{\searrow}(p_i) m^{\swarrow}(q_j  ) 
% %\nonumber\\
% =nm-\sum_{\square \in \nwarrow, \snearrow,\swarrow,\searrow} {m_i^\square} {n_i^\square} $
% 
%\end{align*}
We pick the edge minimizing this expression. Of course, once one edge is picked during the greedy matching algorithm, these numbers need to be recomputed, since multiple edges are not valid anymore.  

We note that after the first iteration, when partial matching $\M$ is not empty, the values of $n_i^\square, m_j^\square$ reflects only edges not violating edges of $\M.$ However, computing these values for every $p_i, q_j$ in time  $O(n^2)$ is straightforward. 

%\subsubsection{
% \noindent{\bf Weighted case:} We propose two basic methods. %\alon{Joe - your intuition } 
% If the weight of $w(p,q)$ is approximated as $w(p)\cdot w(q)$, then \ref{eq:13} has an obvious generalization. 
\subsubsection{Weighted case} We propose two basic methods.

\noindent{\bf Random sampling:   } 
 We consider all $n^2$ potential edges, $P\times Q$, compute the weight of each, and pick a random sample $R$ of (expected) size $k$, where the probability of picking $(p,q)$ is 
 $$k {\frac{{w(p,q)}}{{\sum_{p',q'} w(p',q')}}}$$  
 Next, we 
 % attractively and ??  Joe removed for now. Not sure what is "attractive" here?
greedily find a min-violation edge, with the violation computed with respect to $R$ only. So the expected running time for this stage is $O(n^2k^2)$ per edge added to~$\M$.

This method can be enhanced further, for weight-function (\ref{weight1}), where the weight function is computed with respect to a random sample of {\em vertices} picked according to their weight. 

\vspace{2mm} 

\def\eps{{\varepsilon}}

\noindent{\bf $\eps$-approximation  via scaling algorithm}.   For the weight function  (\ref{weight1}) ($w(p_i,q_j)=\min\{f(p_i), f(q_j)\}$),  a faster approach is proposed.  

Let $w_{\min}, w_{\max}$ be the minimum weight and maximum weight. Consider the logarithmic number of
levels $ \{ w_{\min}(1+\eps)^i \}$ for every $i$ such  that $w_{\min}(1+\eps)^i \leq w_{\max}$. At the $i$'th step, we consider only the vertices with weight $\geq w_{\min}(1+\eps)^i, $ find the number $\zeta_i$ of edges conflicting  $(p,q)$  using the unweighted $O(n^2)$ algorithm, and sum the (rescaled) values $\sum (1+\eps)^i\zeta_i$ as an estimation of $s(p,q).$ It is easy to see that an (unscaled) edge conflicting $(p,q)$ will be counted once, with its weight error bounded by a factor of $(1+\eps). $ \alon{some grownup -- please check}  

% %---------------

\subsection{Greedy via Monotone Sequences   } 
Given $P$, we say that a sequence  $(p_1, p_2,,,p_k)$ is an \mdef{monotone increasing sequence} if  $p_i.x\leq p_{i+1}.x$ and $p_i.y\leq p_{i+1}.y$, for all $i$. Then, a subset $P'\subseteq P$ is said to be a monotone increasing subset if the sequence obtained by ordering $P'$ by $x$-coordinates is a monotone increasing sequence. Analogously we define sequences and subsets that are \mdef{monotone decreasing}.

In the previous section we discussed methods to greedily augment the matching by a single edge. 
One might wonder if it is possible, and whether it is more efficient, to add a collection of edges at each time. For example, Dynamic Programming proved useful in the 1-D case, and it is tempting to apply it for the 2-D case as well. However, applying similar techniques for the 2-D case seems very challenging. It is extremely hard to define sub-problems which are independent on each other, in the sense that the  solution of one does not depend on the solution to another. However, with a non-trivial hint on the approximation we obtain, we could define such a solution for monotone sequences.   Refer to Algorithm  \ref{alg:mono} for the pseudo-code. 

Essentially, if we opt to match $p\in P$ to $q\in Q$, then any decisions taken on the quadrant below and to the left of $p$ could not (in an order-preserving matching) affect matching in the quadrant {\em opposite} this quadrant, consisting of points above and to the right of $p$. Similar observations hold for every pair of opposite quadrants. This observation suggests our search for {\em monotone sequences.}  

Formally a sequence
$P^\uparrow=\{p_1\dots p_k\}\in P$ is an {\em increasingly monotone } sequence if $p_i\prec p_{i+1}.$ (for $i=1\dots k-1$).  Decreasing sequences are defined analogously.  Obviously if in a matching $\M$, $p_i$ is matched  to $q_i\in Q$ then the sequence 
$\{q_1\dots q_k\}$ is increasingly monotone as well. The {\em heaviest monotone sequence} is a monotone sequences maximizing the sum of weights of its edges. 
%one with the maximum weight. %
Given a partial matching $\M$, we describe in this section an algorithm that finds monotonic  sequences $P^\uparrow\subseteq P$ and $Q^\uparrow\subseteq Q$, a matching between them that does not conflict with $\M$ and is of maximum weight. We use this algorithm as follows: In iterations we find an optimal monotonically (increasing or decreasing) sequences with respect to $\M$, include the corresponding matched edges to $\M$ and continue.
Therefor we concentrate on efficient implementation of finding a single monotone matching. We discuss the case of monotonically increasing sequences. The case for monotonically decreasing sequences is handled analogously. 

\def\X{{\mathcal{X}}}
 
Let $\X =P\times  Q =\{(p_i, q_j) \big| p_i\in P, q_j\in Q\}$. By abusing notation, we also consider  each $(p_i, q_j)$ as a point in   $\reals ^4$,  with coordinates 
$(p_i.x, p_i.y, q_i.x, q_i.y)$.   We first describe the algorithm when $\X$ lies in $\reals^4$, and then show that we could orthogonally  project  $\X$ into $\reals^3$, and handle all querie as orthogonal three-dimensional range queries. For a point $p_i\in P$ we define $P_{\prec p_i}=\{p\in P \mid| p \mbox{ is dominated by } p_i\}.$ For $(p,q), (p',q')\in \X$ we say that $(p',q')$ dominates $(p,q)$, and write $(p,q)\prec (p',q')$ iff $p\prec p'$ and $q\prec q'.$ Similarly  for $ \mu=(p,q)$ (for $p\in P, q\in Q$), we write $w(\mu)$ to denote the weight of the matching edge $(p,q)$.  Fix $\mu=(p_i,q_j)$. We define $c[\mu]$ to be the maximum sum of weights of edges in any  maximum increasing monotone matching by using only  points of $P_{\prec p_i}$ to points of $Q_{\prec q_j}$ and ending at $\mu.$

\def\T{{\mathcal{T}}}

{  \footnotesize  
%{\LARGE DO WE NEED NEXT SECTION}

%  {\color{blue} I moved this part from other placei in the paper 
%  \joe{Joe: Where is "Dominance Graph" defined or used?}\alon{I think that i only use it in the mono-sequences subsection. I define it then} 
 
%  \begin{equation}
%   \Econf= \left\{ (u,v)\in \Econf \mbox{$u$ conflicts $v$} \right\}
%  \end{equation}
% }

 \def\qinput#1{{#1}}
  \def\qcomment#1{{#1}}
  \def\qfor{{ FOR }}
    \def\qrof{{ ENDFOR }}
\def\qlet{{ LET }}
\def\qinput{{ QINPUT }}
\def\qfi{{ ENDIF }}
\def\qrepeat{{ REPEAT }}
\def\qif{{ QIF }}
\def\qor{{ OR }}
\def\qelse{{ ELSE }}
\def\qqif{{ IF }}
\def\qand{{ AND }}
\def\quntil{{ UNTIL }}
\def\qqthen{{ THEN }}
\def\qthen{{ THEN }}
\def\qdo{{ DO }}

}

\def\T{{\mathcal{T}}}
 
\def\ua{{^\uparrow}}

\def\G{{\mathcal{G}}}
 
%\subsection{Algorithm for a given $P,Q$}
    
    \def\R{{\mathcal{R}}}
    \def\F{{\mathcal{F}}}
    
    \def\TRUE{{\mbox{TRUE}}}
    \def\FALSE{{\mbox{FALSE}}}
 \def\X{{\mathcal{X}}}

% Fix $\mu=(p_i,q_j)$. We define $c[\mu]$ to be a the maximum sum of weights  edges in any  maximum matching  increasing monotone matching  uses only  points of $P_{\prec p_i}$ to points of $Q_{\prec q_j}$ and ending at $\mu.$
 
 %\jie{This sentence is confusing}
 
% We will store the data structure of Claim \ref{DS} \jie{which claim? This seems to refer to something in the future.} but this time in 4-dimension $\reals^4$. Now each  query and each update can be handled in time $O(\log ^4n)$. We initialize all the values to $-\infty$. 
 
 To obtain a fast asymptotic running time, we will use Algorithm \ref{alg:mono}. We maintain a 4-D orthogonal range tree $\T(\X)$ \cite{de1997computational}. Each leaf in the tree is associated with a node in $\X.$  
 Each internal  node $\eta\in T$ is associate with  
\begin{enumerate}
\item A range $R_\eta$ which is a rectangle in $\reals^4.$ 
\item  A subset  $\X_\eta\subseteq \X$ which include all points of $\X$ inside $R_\eta$. 
\i The point $\mu^*\in \X_\eta, $  which  is  the  last point  of  the   heaviest monotone sequence ending at $\mu^*$, for $\mu^*\in \X_{\eta}$.
 \item $c(\eta)$ --- the weight of this sequence.  
 \end{enumerate}

 The idea is to use an Orthogonal Range search data structure for the points in $\X$. We scan these points in topological  increasing order, so if $\mu\prec \mu'$ then we access $\mu'$ after accessing $\mu$. This will guarantee that $c(\mu)$ is fully computed at this point.

 \begin{lemma} \label{DS}  We could preprocess $\X$ into a data structure $\T$ such that the preprocessing time and space are both    $O(n^2\log^4 n)$,  given a query axis-parallel rectangle $R\subseteq \reals^4$, we could find  a set of $O(\log ^4n)$ nodes $\Xi=\{\eta_1\dots  \eta_k\}$ of $\T$, each corresponds to a subset $\X_{\eta_i}\subseteq \X$ that is  fully contained in $R$, and each  is associated with a value $c[\mu_i]$ which is $\max\{c[\mu']~|~ \mu'\in \X_{\eta_i}\}$ 
\end{lemma}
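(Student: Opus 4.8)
The statement to prove is Lemma~\ref{DS}: that $\X$ (of size $n^2$, viewing each $(p_i,q_j)$ as a point in $\reals^4$) can be preprocessed into a structure $\T$ with $O(n^2\log^4 n)$ time and space, such that any axis-parallel query rectangle $R\subseteq\reals^4$ is covered by $O(\log^4 n)$ canonical nodes, each storing the maximum $c[\cdot]$-value over its subset. The plan is to build a standard four-level orthogonal range tree on $\X$ and to augment each internal node with an aggregate maximum field.

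First I would recall the textbook construction of a $d$-dimensional range tree (here $d=4$) on a point set of size $N=|\X|=nm=O(n^2)$: a balanced BST on the first coordinate, where each node's secondary structure is a $(d-1)$-dimensional range tree on the points in its subtree; unrolling the recurrence gives $O(N\log^{d-1}N)$ space and preprocessing time and the property that any axis-parallel box is reported as the disjoint union of $O(\log^d N)$ canonical subsets, one per canonical node. With $N=O(n^2)$ we have $\log N=\Theta(\log n)$, so this is $O(n^2\log^4 n)$ space/time and $O(\log^4 n)$ canonical nodes — matching the claimed bounds. I would state this as the skeleton and then handle the augmentation.

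Second, the augmentation: at every internal node $\eta$ (in the innermost/fourth level) I store $c(\eta)=\max\{c[\mu]\mid \mu\in\X_\eta\}$ together with the witness $\mu^*$ attaining it. Since the $c[\cdot]$ values are supplied externally (they are computed on the fly by the scanning order described just before the lemma), I would treat the structure as supporting: (i) an \emph{update} that, when $c[\mu]$ becomes known, walks the $O(\log^4 n)$ root-to-leaf paths through all levels containing $\mu$ and refreshes the stored maxima along them in $O(\log^4 n)$ time; and (ii) a \emph{query} that, given $R$, descends to the $O(\log^4 n)$ canonical nodes and returns their stored $(c(\eta),\mu^*)$ pairs. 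Each canonical node's max is the correct maximum over its contained subset because the maxima are maintained bottom-up and the canonical subsets are exactly the point sets $\X_\eta$. Reporting the \emph{value} $\max\{c[\mu']\mid\mu'\in\X_{\eta_i}\}$ at each returned node is then immediate. The total preprocessing cost absorbs the $O(n^2)$ updates at $O(\log^4 n)$ each, i.e. $O(n^2\log^4 n)$, consistent with the stated bound.

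The main thing to be careful about — the only real obstacle — is the interplay between the dynamic arrival of the $c[\mu]$ values and the static range-tree shape. Because we scan $\X$ in an order consistent with $\prec$ (a linear extension of domination), every query that computes $c[\mu]$ only ever asks about points already processed, so the "update-then-query" discipline is well-founded and no rebalancing is needed (the tree's combinatorial structure is fixed at build time; only the aggregate fields change). I would make this explicit, then note that the remaining details — choosing coordinates to break ties so that $\prec$ is a strict total order, and the projection from $\reals^4$ to $\reals^3$ mentioned in the surrounding text, which shaves one $\log$ factor — are routine and can be deferred; they do not affect the asymptotics claimed in the lemma.
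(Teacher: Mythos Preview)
Your proposal is correct and matches the paper's approach. In fact, the paper does not give a standalone proof of this lemma at all: it simply describes the augmented four-dimensional orthogonal range tree (citing the de~Berg et~al.\ textbook), lists what each node stores, and then states the lemma as a direct consequence; your write-up fills in precisely the standard range-tree construction, canonical-node decomposition, and max-augmentation details that the paper leaves implicit.
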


 Next we notice that filtering   points of $\X$ based on their first coordinate is not necessary. That is,   we only need to store each point of $(p,q)\in \X$ using only $(p.y,q.x,q.y)$, since a query on other regions yields that the result is zero, and will not effect the query time nor the correctness.
  
 \begin{lemma}
  So the data structure is in $\reals^3. $  Hence the query time of Lemma \ref{DS} is improved to $O(\log^3n).$  The space requirement also drops to $O(n^2\log^2n).$
\end{lemma}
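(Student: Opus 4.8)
The plan is to show that one of the four coordinates indexing the range tree $\T(\X)$ of Lemma~\ref{DS} --- the coordinate $p.x$ of an edge $(p,q)\in\X$ --- is redundant, given the order in which Algorithm~\ref{alg:mono} touches the structure, so that $\T(\X)$ may be replaced by an orthogonal range tree in $\reals^3$ on the coordinates $(p.y,q.x,q.y)$. The three quantitative claims of the lemma then follow from the standard analysis of $d$-dimensional range trees with $d=3$ in place of $d=4$.

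First I would make the scanning order explicit. Algorithm~\ref{alg:mono} visits the edges of $\X=P\times Q$ in a topological order of the domination relation $\prec$ in which the $P$-components appear in $\prec$-order (e.g.\ by increasing $p.x$ under general position); moreover, for a fixed point $p_i\in P$ all edges $(p_i,q_1),(p_i,q_2),\dots$ are first \emph{queried} --- which computes their values $c[\cdot]$ --- and only afterwards \emph{inserted}, as one batch, into $\T$. The point of this schedule is the invariant: at the moment the query for $\mu=(p_i,q_j)$ is issued, every edge $\mu'=(p',q')$ currently stored in $\T$ satisfies $p'\prec p_i$, hence in particular $p'.x<p_i.x$.

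Given the invariant, correctness of the $3$-D structure is immediate. To compute $c[\mu]$ for $\mu=(p_i,q_j)$ the algorithm needs $\max\{\,c[\mu']\mid \mu'\ \text{stored},\ \mu'\prec\mu\,\}$, and $\mu'\prec\mu$ unfolds to the four inequalities $p'.x<p_i.x$, $p'.y<p_i.y$, $q'.x<q_j.x$, $q'.y<q_j.y$. By the invariant the first inequality holds for every stored edge, so over the stored edges the predicate $\mu'\prec\mu$ is equivalent to the last three inequalities alone. Equivalently, any canonical node of the original $4$-D query that additionally required $p.x\ge p_i.x$ would contain no stored edge and contribute weight $0$ to the maximum, so it can be omitted. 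Hence the $3$-D box query on $(p.y,q.x,q.y)$ returns exactly the same value as the $4$-D query, $c[\mu]=w(\mu)+\max\{\,c[\mu']\mid\mu'\ \text{stored},\ \mu'\prec\mu\,\}$ is still computed correctly, and the same reasoning applies verbatim to monotone decreasing sequences. Instantiating the standard range-tree bounds \cite{de1997computational} with $d=3$ and $N=|\X|=nm=O(n^2)$ --- namely $O(N\log^{d-1}N)$ space and preprocessing, $O(\log^d N)$ canonical subsets per query, with the per-node maximum-$c$ augmentation and the lazy activation of a single point each adding $O(\log^d N)$ --- gives space and preprocessing $O(n^2\log^2 n)$ and query time $O(\log^3 n)$, which are the three assertions of the lemma.

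I expect the only delicate point to be the interleaving of queries and insertions among edges that share a common $P$-point: the edges $(p_i,q_1),(p_i,q_2),\dots$ all carry the same first coordinate $p_i.x$, so if one of them were already stored when another is queried it could be picked up by the $3$-D query, producing a ``sequence'' that matches $p_i$ to two distinct points of $Q$ --- an illegal matching. Making the batch schedule above explicit (and, for $P$-points that happen to share an $x$-coordinate, invoking general position or the tie-breaking already built into $\prec$) is exactly what closes this gap; the range-tree bookkeeping itself is routine and holds no surprises.
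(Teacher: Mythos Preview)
Your proposal is correct and follows essentially the same reasoning as the paper: because the points of $P$ are scanned in increasing $x$-order, every edge already stored in $\T$ at the time of a query for $(p_i,q_j)$ automatically satisfies the $p.x$-constraint, so that coordinate can be dropped and the range tree lives in $\reals^3$ with the stated bounds. Your treatment is considerably more careful than the paper's three-sentence argument---in particular your explicit batching of queries before insertions for edges sharing the same $p_i$---but the core idea is identical.
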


\begin{proof}
 Since all weights are positive, and the points of $P$ are accessed in increasing $x$ order lexicographically, then once accessing $c((p_i,q_j))$, its value is strictly positive only due to a point $p'\in P$ such that $p'.x\leq p_i.x.$ Therefor there is no need to filter nodes of $\T$ based on their very first coordinate.  
\end{proof}

While orthogonal range trees are almost optimal theoretically, they suffer from several drawbacks. The space required is super-linear, and in practical applications, they tend to be inferior to other hierarchical spatial  data structures as kD-trees. The latter could be applied with linear memory, and faster search time on realistic data.

\begin{lemma} Instead of the orthogonal range tree, if we use a 3-dimensional kD-tree, the space requirement will be linear, while  the asymptotic running time per a query will increase to $O(|\X|  ^{(1-1/d)})= O((n^2)^{2/3})=O(n^{4/3}).$ 
\end{lemma}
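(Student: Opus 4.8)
The plan is to leave Algorithm~\ref{alg:mono} and the scan unchanged, and only swap the underlying data structure: replace the (effectively $3$-dimensional) orthogonal range tree of the preceding lemma --- the projection of $\T(\X)$ from $\reals^4$ into $\reals^3$ --- by a $3$-dimensional kD-tree built on the same point set $\X$. First I would recall the textbook facts about kD-trees~\cite{de1997computational} in $d=3$ dimensions: a kD-tree on $N$ points uses $O(N)$ space, is built in $O(N\log N)$ time, has depth $O(\log N)$, and answers any axis-parallel orthogonal range query by visiting $O(N^{1-1/d})$ of its nodes. Since $N=|\X|=nm=O(n^2)$, the space is linear in $|\X|$, i.e.\ $O(n^2)$, and a single range query visits $O((n^2)^{1-1/3})=O((n^2)^{2/3})=O(n^{4/3})$ nodes.

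Next I would describe the augmentation and how the scan uses it. Each node $\eta$ of the kD-tree stores, besides its splitting plane and the box $R_\eta$ bounding its subtree, the value $c(\eta)=\max\{c[\mu']\mid\mu'\in\X_\eta\}$; initially every such value is $0$. The points $\mu=(p_i,q_j)$ are scanned in a $\prec$-topological order. When $\mu$ is processed I would: (i) query the kD-tree with the dominated region $R_\mu=\{\mu'\mid\mu'\prec\mu\}$, expressed in the three retained coordinates exactly as in the preceding lemma, and take the maximum of $c(\eta)$ over the canonical nodes $\eta$ whose box lies fully inside $R_\mu$; (ii) set $c[\mu]=w(\mu)+(\text{that maximum})$, or $c[\mu]=w(\mu)$ if the maximum is $0$; and (iii) walk the root-to-leaf path of $\mu$'s leaf, refreshing $c(\eta)$ along it, at cost $O(\log N)=O(\log n)$, which is dominated by the query cost. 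Correctness carries over from the range-tree version: because all edge weights are positive and the scan respects $\prec$, every $\mu'\prec\mu$ has already received its final value $c[\mu']$, while any point inside $R_\mu$ that is still unprocessed carries $c$-value $0$ and hence cannot affect the maximum.

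Finally I would note that the max-aggregation does not change the $O(N^{1-1/d})$ bound: the classical kD-tree analysis counts the nodes whose bounding box is properly crossed by the query box, and at a canonical node lying entirely inside $R_\mu$ we simply read the stored maximum in $O(1)$ time instead of recursing further. Multiplying by the $O(n^2)$ scanned points and adding $O(n^2\log n)$ for construction and for the path updates yields the claimed per-query cost $O(n^{4/3})$ at linear space --- a factor $n^{1/3}$ slower per query than the range-tree bound, in exchange for linear memory and the practical advantages of kD-trees noted above. I do not anticipate a genuine obstacle; the two points that need care are (a) verifying that the aggregate (max) augmentation preserves the standard $O(N^{1-1/d})$ kD-tree query bound, and (b) checking that running the scan on a \emph{static} tree built over all of $\X$ is sound --- which, as above, follows from positivity of the weights together with the topological processing order.
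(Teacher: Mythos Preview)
Your proposal is correct and in fact more complete than what the paper offers: in the paper this lemma is stated without proof, as an immediate consequence of the textbook bounds for $d$-dimensional kD-trees \cite{de1997computational} applied with $d=3$ and $N=|\X|=O(n^2)$. Your write-up goes further by spelling out the max-augmentation of the kD-tree nodes, the $O(\log N)$ path update, and the reason the static tree over all of $\X$ suffices under a $\prec$-topological scan with positive weights --- details the paper leaves implicit (the positivity/topological-order argument appears only in the proof of the preceding lemma for the range-tree variant). There is nothing to correct; your points (a) and (b) are the right places to be careful, and both are handled.
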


    \newcommand{\minf}{{\scriptstyle-\infty}}

 %   \newcommand{\minf}{{-\infty}}

%  This data structure could be generalized as follows: We could preprocess $P$ as above, such that given Given two points $q_1, q_2\in \reals^2$, we could find a set $\Xi$ of $O(\log^2n)$ nodes of $\T$ such that given a query points $q_1, q_2$, we could find a set $\Xi$ of $O(\log _2n)$ nodes of $\T$ such that 
 
%     $$ \bigcup_{\mu\in\Xi} S_\mu ~ = ~\{p\in P ~|~ q_1\prec p \prec q_2\}$$

\begin{algorithm}[t]\label{alg1}
\small
\SetAlgoLined
 {\bf Input:}  $P$ and $Q$ (sorted in an increasing order), and a partial matching $\M\subseteq P\times Q$ \; 

 {\bf Init:}  Initialize the an orthogonal range search $\T$  for $\X$, \\and set $c(\eta)=0, \ \forall \eta\in\T.$\\
 \For{$i=1$ to $|P|$}{
 \For { $j=1$ to $|Q|$  }{ 
 \If{$\mu=(p_i, q_j)$ does not conflict any edge in $\M$} 
 {
 Set the range (rectangle) \\
$ R=\left\{  (\minf, p_i.y)\times 
  (-\infty, q_i.x )\times  (-\infty, 
  q_i.y )\right\}$\\
  
Perform a  range query in $\T$ with the range $R$ to obtain a set $\Xi=\{\eta_1\dots \eta_k\}$ of $O(\log^3n)$ nodes in $\T$\\
\tcc{Each $\eta_i\in\Xi $  corresponds to a region fully contains in $R$.} 
Let $\mu^*$ be $\arg\max\{c(\eta) | \eta\in \Xi\}$\\
Set $c(\mu)=w(p_i,q_j)+c(\mu^*)$ \\ 
%\max_{\eta_i\in \Xi} \ c(\eta_i)$\\
\For{ each     $\eta'_i\in \T$ ancestor of $\mu$} 
{Set $c(\eta'_i)=max\{c(\eta'_i), c(\mu)\}$}
}
}
}
\caption{Finding heaviest increasing monotone chain \label{alg:mono} }
\end{algorithm}

\renewcommand{\minf}{\scalebox{1.75}{$-\infinity$}}

\def\C{{\cal C}}
\def\Q{{\cal Q}}

\subsubsection*{Running time and correctness } 
Given a partial matching $M,$  
it  takes $O(n^2\log^3n)$ to find the heaviest (max-weight) monotone matching not conflicting $M$. At this point these edges are added to $M$, and the process repeats. Since $P$ could be decompose into $\leq \sqrt n$  monotone sequences \cite{erdos1935combinatorial}, the number of iterations is $\leq \sqrt n. $  The overall running time is $O(n^{2.5}\log ^3 n),$ and the space is  $O( |P| |Q|\log^2n)$. Here $n=\max\{|P|,|Q|\}.$

\def\T{{\cal T }}

\subsection{Lower bounds} It is interesting to note that improving the bound below $\Omega(n^2)$ is unlikely, given that even if the points are on a line, then our problem is quite similar to the edit distance problem, and LCS problems,  for which recent lower bounds are proven under the SETH assumption \cite{backurs2015edit}. Hence we are only logarithmic distance away from the claimed optimum.

% \subsection{Finding the 2nd, 3rd etc sequences}
% Assume that the first matching matches 
% $p_{i_k}$ to $q_{j_k}$ where $k=1\dots |K|$ is the cardinality of the matching.  Here $i_1, i_2\dots i_{|K|}$ are the indices of $P$. 
% and  $j_1, j_2\dots j_{|K|}$ are the indices of $Q$. 

% Now when seeking of $\{p\in P | p_{{i_k}-1} \prec p \prec p_{i_k}\}$ to the subset 
% $\{q\in Q | q_{{j_k}-1} \prec q\prec q_{i_k}\}$

% The first matching (between subsets $P'\subseteq  P$ to $Q'\subseteq Q$, which are constrains that could easily be added to the setting of the query region $R$. 

%Note that the matching $M^*$ found by the algorithm  also specifies  a partitions of the sets $P$ and $Q$. To find the matching of $Q_2$ we first partition $P$ as follows. Assume $q_i,q_{i+1}$ are two {\em consequitive } points in $Q_1$. Relable the points of $P$ such that $p_i\lra q_i\in M_1$.  Obviously $q_i\prec q_{i+1}$ and $p_i\prec p_{i+1}$. 

%******************************Analysis***************************************************
\subsection{Approximation guarantees  } 
% \subsubsection{Approximation for the 
% For the 2-Stages ILP} 
% \alon{did we describe the 2-stage ILP?}

\begin{lemma}
Let $opt$ be the maximum weight of the maximum order-preserving    matching. Then by using the monotone sequences algorithm \label{ZZZ}  we obtain a matching of weight   $\geq opt/\sqrt{\min\{|P|,|Q|\}}$. 
\end{lemma}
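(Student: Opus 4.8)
The plan is to show that the heaviest monotone matching found by Algorithm~\ref{alg:mono} captures at least a $1/\sqrt{\min\{|P|,|Q|\}}$ fraction of the optimal weight, by a pigeonhole/averaging argument over a decomposition of the optimal matching into monotone pieces. First I would fix an optimal order-preserving matching $\M^{*}$ of weight $opt$, and let $P^{*}\subseteq P$ be the set of points of $P$ that are matched in $\M^{*}$. Since $\M^{*}$ is order-preserving (equivalently, non-conflicting), the map sending each matched $p$ to its partner $q$ is consistent with two monotone functions $\psi,\phi$ on the coordinates; in particular, for any two matched pairs $(p,q),(p',q')\in\M^{*}$, if $p\prec p'$ then $q\prec q'$, so "$\prec$-comparability in $P^{*}$" is preserved to "$\prec$-comparability in $Q^{*}$". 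Hence a subset $S\subseteq P^{*}$ is a monotone (increasing or decreasing) sequence if and only if its image under $\M^{*}$ is a monotone sequence in $Q^{*}$, and the restriction of $\M^{*}$ to $S$ is a valid monotone matching of the kind Algorithm~\ref{alg:mono} searches for.

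Next I would invoke the Dilworth/Erd\H{o}s--Szekeres-type decomposition \cite{erdos1935combinatorial}: the point set $P^{*}$ (with the partial order given by $\prec$, or rather the union of the increasing and decreasing chain structures) can be partitioned into at most $\sqrt{|P^{*}|}\le\sqrt{\min\{|P|,|Q|\}}$ monotone sequences $S_{1},\dots,S_{t}$, where each $S_{\ell}$ is monotone increasing or monotone decreasing. Because $\M^{*}$ preserves comparability as noted above, each $\M^{*}|_{S_{\ell}}$ is a non-conflicting monotone matching between $S_{\ell}$ and its image. The weights are additive over the pieces, $\sum_{\ell=1}^{t} w(\M^{*}|_{S_{\ell}}) = opt$, so by averaging some piece $S_{\ell_{0}}$ satisfies $w(\M^{*}|_{S_{\ell_{0}}})\ge opt/t \ge opt/\sqrt{\min\{|P|,|Q|\}}$.

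Finally I would argue that this heavy monotone piece is a feasible candidate for the very first iteration of the monotone-sequence algorithm: it is an increasing (or decreasing) monotone matching, and with $\M=\emptyset$ there is nothing for it to conflict with. Since Algorithm~\ref{alg:mono} returns the \emph{heaviest} monotone matching compatible with the current partial matching (this is exactly the correctness claim established for the algorithm, via the range-tree dynamic program computing $c[\mu]$), its output in the first iteration has weight at least $w(\M^{*}|_{S_{\ell_{0}}}) \ge opt/\sqrt{\min\{|P|,|Q|\}}$. The algorithm only adds more edges in later iterations and weights are nonnegative, so the final matching has weight at least this much, giving the claimed $O(\sqrt{n})$ approximation factor.

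The main obstacle I anticipate is the bookkeeping in the first step: one must be careful that the relevant decomposition of $P^{*}$ into $\le\sqrt{|P^{*}|}$ monotone sequences really does transport through $\M^{*}$ to a valid monotone matching in the algorithm's sense — in particular that "monotone increasing in $P$" together with the order-preservation of $\M^{*}$ forces "monotone increasing in $Q$" (and likewise for decreasing), so that each piece $\M^{*}|_{S_{\ell}}$ is one of the two chain types the algorithm actually enumerates. This is where the equivalence between "order preserving" and "non-conflicting", and the definition of $\prec$, need to be used carefully; everything after that is a one-line averaging argument plus the already-established optimality of the per-iteration subroutine.
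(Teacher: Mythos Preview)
Your proposal is correct and follows essentially the same approach as the paper: decompose (via the Erd\H{o}s--Szekeres bound) into at most $\sqrt{\cdot}$ monotone sequences, apply pigeonhole to the weight of $\M^{*}$ restricted to each sequence, and observe that the first iteration of Algorithm~\ref{alg:mono} already finds a monotone matching at least this heavy. Your version is in fact slightly cleaner than the paper's: by partitioning only $P^{*}$ (the matched points of $\M^{*}$) rather than all of $P$, you get $t\le\sqrt{|P^{*}|}\le\sqrt{\min\{|P|,|Q|\}}$ directly, whereas the paper partitions $P$ itself and so literally obtains only $t\le\sqrt{|P|}$, matching the stated bound only when $|P|\le|Q|$; your extra care about the order-preservation of $\M^{*}$ transporting monotonicity from $P^{*}$ to $Q^{*}$ is also a point the paper leaves implicit.
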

 \begin{proof} 
 Consider any partition of $P$ into monotone sequences. By the classical results of  Paul Erd{\"o}s   and  George Szekeres \cite{erdos1935combinatorial} $t\leq \sqrt n.$    Let $M^*$ be the set of all these edges, and let $M^*|_{P^\uparrow_i}$ be the set of edges with one endpoint in $P^\uparrow_i$ (for every $1\leq i \leq l$). 
 Since the sum of all matched edges in $M^*$ is $opt$, the sum of all edges $M^*|_{P^\uparrow_i}$ must be $\geq opt/l$. This cost is obviously not larger than the maximum we find along any monotone sequence. 
 \end{proof}

\subsection{Extension to Multi-Dimensional Data}   
Extending our matching algorithms to handle three or more dimensional data is well-motivated. Many databases have multiple columns of numerical data. For example, in the census database, there may be age, salary, zip code, etc. The data becomes increasingly sparse as the number of dimensions gets larger. Matching every single column separately will yield far from optimal results.

All our algorithms are still valid in this setting, where $P$ and $Q$ are points in $\reals^d$ for a constant $d$, $d\geq 3$.   However, the (worst-case) guarantees and running time of the monotone-greedy algorithm degrade. Since the longest monotone sequence of a set of $n$ points in $\reals^d$ is $\Theta(n^{1/d})$, we are only guaranteed a matching of weight $\geq opt/n^{1/d}$, where $opt$ is the optimum weight order-preserving matching.

\section{Evaluation}
In this section, we use both synthetic data and real-world datasets to evaluate the performance of our matching algorithms and attack effectiveness. 

\subsection{Data Sources} 

\noindent\textbf{Synthetic Data Generation}.
The synthetic data is generated in the following manner.
Let $R $ be a  set of uniformly distributed points in a 2-D area. Here $R$ is a superset of the data.  
Then, for every $r\in R$ assign a point weight $f'(r)$, 
% which is sampled from a Gaussian distribution.
% Copy $R$ into a new set $P$ (auxiliary/plaintext dataset), and then generate another  dataset (target/ciphertext dataset) $Q$ from $R$ as follows. 
which is a uniformly distributed pseudo-random integer on the interval $[f_{min}, f_{max}]$. Let $P$ denote the auxiliary/plaintext dataset and let $Q$ denote the target/ciphertext dataset. $P$ and $Q$ are generated as follows.

For the case of $Q$ being a subset of $P$: copy $R$ into $P$, and then generate $Q$ from $R$ with the following sampling process. For each $r\in R$, we randomly and independently copy it as a point $q\in Q$ with  probability  $\beta\in(0,1)$. And if $r$ was copied to a point  $q\in Q$, then the frequency $f'(q)$ is assigned to be a binomial variable  with probability $p_{\text{bion}}$ and expectation  $f'(r)\cdot p_{\text{bion}}$. For the other case that $P$ intersects with $Q$, we sample the points of both $P$ and $Q$ from $R$ randomly and independently with probability $\beta$, and similarly the weight of copied points are both sampled from a binomial distribution with probability $p_{\text{bion}}$. Finally, the integer point weights are normalized to frequency for $P$ and $Q$ in both cases, e.g. $f(p) = \frac{f'(p)}{\sum_{p'\in P}f'(p')}, \forall p \in P$.  % In this model I am also not sure how to compute the ground trouth. 

The rationale of point frequency following a binomial distribution is that, in the real-world, the set of people who appear in one dataset may choose to be present or not in another dataset  independently at random, e.g., Uber users can be regarded as randomly sampled from a USGS/census location database.

\smallskip\noindent\textbf{Real-world Data Sources}.
We use three real-world datasets to evaluate the performance of algorithms, and we start with the location check-in data from the social networking application Brightkite \cite{cho2011friendship}. The location coordinates are expressed in latitude and longitude. We extract the records in a certain area (latitude: [37.700887, 37.826664], longitude: [-122.512317, -122.386762]), and take the data collected from Apr. 2008 to Apr. 2009 as the auxiliary (unencrypted) dataset while the data from Sep. 2009 to Sep. 2010 as the target (encrypted) dataset. We randomly choose 500 and 300 points from the reference and target datasets as $P$ and $Q$ to perform the matching. Then we process the data in $P$ and $Q$ by discretizing it into location grids whose granularity is tunable, which means we reserve certain number of digits after the decimal point to represent that grid. For example, if we reserve three digits (0.001 as unit grid length which is approximately 0.1 km in the real-world), there are 399 and 247 points in auxiliary and target datasets respectively. Note that we do not need to actually encrypt the target dataset in our evaluation since it does not change the matching result, as we only use the order and frequency information of the points. 

We also use the city and town population totals for the US estimated by the United States Census Bureau \cite{uscensus}. To get the location information, we choose the 313 cities with latitude and longitude provided by Wikipedia \cite{wiki} (excluding Jurupa Valley), associating with 2010 census data as the auxiliary dataset $P$. And we sample 180 cities, combining with the corresponding 2018 estimated population data to get the target dataset $Q$. Each city record has two dimensional location data, namely $x$ and $y$ coordinates representing latitude and longitude, with its population as frequency. 

% We use Gowalla and Brightkite datasets, which are location check-in data from two different social networking applications \cite{loc-dataset1,loc-dataset2}. 
% %\Ming{Put URLs in citations.} \url{https://snap.stanford.edu/data/loc-gowalla.html}
% And we use Brigntkite as the target dataset (encrypted) and Gowalla as the auxiliary dataset (unencrypted).  Each location data record  has two dimensions ($x$ and $y$ coordinates). %Not sure how to demonstrate the results for 3D and 4D.
% Note that, we do not need to actually encrypt the target dataset since it does not change the matching result, as we only use the order and frequency information of the points.
% %Note that, since the matching algorithms only leverage order and frequency information, it 
% We process the data by extracting the same geographic region (San Francisco), and then discretize it into location grids whose granularity is tunable. The location coordinates are expressed in longitude and latitude and we reserve certain number of digits after the decimal point to represent that grid. For example, if we reserve three digits (0.001 as unit grid length which  is equivalent to approximately 0.1 km in the real-world),  there are 3,136  and 3,430 points in $Q$ and $P$ respectively, and we randomly choose 600 and 800 points from each of them to perform the matching.

In addition to location datasets, we also evaluate our attacks in the context of medical data. We leverage a patient discharge dataset, which contains the distribution of inpatient discharges by principal diagnosis group for each California hospital \cite{medical}. We randomly sample the records collected in 2009 and 2014 as the auxiliary dataset and target dataset. More specifically, we select the numerical facility ID as the $x$ coordinate. For the $y$ coordinate, we convert the categorical diagnosis result to numerical data by assigning a specific diagnosis group to a numeric number. The frequency of each facility-diagnose pair is the recorded number of patients. There are overall 448 facilities in the original dataset, we first sample 30 of them as the facilities in $P$, and then sample 20 out of these 30 samples as facilities in $Q$. Then we extract the diagnosis and number of patients of the selected facilities to get $y$ coordinate and frequency. At the end, we obtain 425 records in auxiliary dataset $P$ and 272 records in target dataset $Q$.

%The frequency for all the datasets are normalized for $P$ and $Q$ as described above.

\subsection{Evaluation Metrics}
%\Ming{Need to express the four metrics I mentioned in formulas, for recovery rate. And the bound of them.}
For a given ciphertext record $c = \llbracket m \rrbracket$ and its plaintext $m$, the attack algorithm matches $c$ to a corresponding plaintext as $m'$ which might be different from  $m$. We say $c$ is correctly matched iff $m'=m$.
% We say a match is exact if $m=m'$, and approximately correct if: $0\leq d(m, m') < \epsilon$, where $d(m, m')$ is the Euclidean distance between $m$ and $m'$.
Note that, when $P$ intersects with $Q$, for a ciphertext $c = \llbracket m \rrbracket \in Q$, its plaintext $m$ may or may not be in the set $P$. Hence we define the record set of ciphertexts and plaintexts as $Q_r$ and $P_r$ by: $Q_r = \{c \in Q | c = \llbracket m \rrbracket,m \in P\}$, $P_r = \{m \in P | \llbracket m \rrbracket \in Q\}$. 
% Denote the number  of   the unique ciphertext tuples/points in target dataset $Q$ as $|Q|$, number of all the ciphertext records in Q as $|Q|_r$. For the auxiliary dataset $P$,  $|P|$ and $|P|_r$ are defined similarly. 
Define $I_{ij} = 1$ if point $p_i$ is correctly matched to point $q_j$, where $(p_i, q_j)$ is a plaintext-ciphertext pair; $I_{ij} = 0$ otherwise.  
We will use the following metrics for performance evaluation:
% \begin{enumerate}
%     \item Point recovery rate (without duplicates): $\sum\limits_{i = 1}^{i = |Q|}I_{ij} /|Q|$. 
%     \item Record recovery ratio (ratio of people in correctly matched locations):
% $\sum\limits_{i = 1}^{i = |Q|}f_iI_{ij}/|Q_r|$, where $|Q_r|=\sum\limits_{i = 1}^{i = |Q|}f_i$. 
% Note that, both above definitions has two versions: exact or approximate and we focus on the latter in this paper. 
%     \item  Normalized objective, defined as the objective of an algorithm divided by the optimal ILP solution, %$\frac{\text{Obj. of an algorithm}}{\text{Obj. achieved by the ILP solution}}$, 
%     which is to evaluate how far the algorithm is from the optimal objective function in the ILP formulation.
% \end{enumerate}
\begin{enumerate}
    \item Point recovery rate: $\sum_{j = 1}^{j = |Q|}I_{ij}  /|Q|$. 
    \item Record recovery rate (ratio of people in correctly matched locations):
$\sum\limits_{j = 1}^{j = |Q|}f_jI_{ij}/\sum\limits_{i = 1}^{i = |Q|}f_i$. 
    \item  Normalized objective, defined as the objective of an algorithm divided by the optimal ILP solution, %$\frac{\text{Obj. of an algorithm}}{\text{Obj. achieved by the ILP solution}}$, 
    which is to evaluate how far the algorithm is from the optimal objective function in the ILP formulation.
\end{enumerate}
Note that there is an upper bound (the best one can do) of the recovery ratio. In the case of $Q$ is a subset of $P$, the upper bound is 1; in the case of $Q$ intersects with $P$, the upper bound is: $ |Q_r| $ / $|Q|$ for metric (1), and $\sum\limits_{i \in Q_r}f_i$ for metric (2). To remove the impact of the size of the intersection, we use \textit{normalized recovery rate} by dividing the recovery ratios with its upper bound in this paper.

Besides, we found that the data density and the similarity of the frequency distributions for 1-D and 2-D data are two factors that have significant impact on algorithm performance, we define the following metrics to quantify them:
\begin{enumerate}
    \item Overlap ratio: 
    % \textcolor{red}{This needs update. Call it overlapping ratio? and say it can be regarded as   effective data density. Since originally sparsity means ratio of number of points in Q against total points in the domain. }
    the ratio of distinct records of ciphertext in $Q$ to plaintext in $P$. For 2-D data, it is $|Q_r|$ / $|P|$, for 1-D data, e.g. it is $|Q_{r_x}|$ / $|P_x|$ on x-axis, where $|Q_{r_x}|$ and $|P_x|$ are the number of unique x coordinates in $Q_r$ and $P$ respectively. If we 
    define the data density as the ratio of distinct points in $Q$ against total points in the domain of all possible plaintexts, then the overlap ratio can be regarded as the \textit{effective data density}, and when this ratio is small, the data is also sparse.
    
    \item Overall frequency similarity: 
    % \textcolor{red}{This is just freqency distribution's similarity or what? P and Q don't even have same number of points. How can you compare distance? }
    denote the union of points in $P$ and $Q$ as $R$. For a point $r$ in $R$, its frequency on $P$ is defined as the frequency of the  corresponding point if it appears in $P$, otherwise is 0. In this way, we get the frequency distribution of all these points on $P$ and $Q$ as $f(p) = \big(f(p_1),\cdots,f(p_k)\big)$ and $f(q) = \big(f(q_1),\cdots, f(q_k)\big)$. The overall frequency similarity of $P$ and $Q$ defined with Hellinger distance is:  
    $$H\big(f(p),f(q)\big) = \frac{1}{\sqrt{2}}\sqrt{\sum_{i=1}^{k}\Big(\sqrt{f(p_i)}-\sqrt{f(q_i)}\Big)^2}$$
    For Hellinger distance, we have $0 \leq H\big(f(p),f(q)\big) \leq 1$, and when $H\big(f(p),f(q)\big)$ is greater, the larger overall distance is between two distributions, and the less the similarity is.
\end{enumerate}

%\end{itemize}

%a, Exact recovery ratio for data point:  # of correctly matched unique ciphertext points / |P|

%b, Approximate recovery ratio for data point:  # of  approximately (within certain range) matched unique ciphertext points / |P|

%c, Exact recovery ratio for data record:  # of correctly matched ciphertext records / |P|_r

%d, Approximate recovery ratio for data record:  # of approximately (within a certain range) matched ciphertext records / |P|_r

% Note that there is an upper bound (the best one can do) of the recovery ratio.  In the case of (I) Q is a subset of P, the upper bound is 1; in the case of (II) Q intersects with P, the upper bound is: $ \text{No. of unique pairs of} (c, m)$ / $|Q|$ for metric (1); And $  \text{No. of data records for } (c, m)$ / $|Q_r|$, for (2). To remove the impact of  size of intersection, we use normalized recovery ratio by dividing the recovery ratios by the upper bound.

Also, we would like to compare the performance with a direct extension of the 1-D matching algorithm in \cite{GSBNR17} to 2-D data in the following way. We separately use the 1-D matching algorithm on each of the dimensions ($x$ or $y$ coordinates), with the order of $x$ (or $y$) and marginal frequency of each $x$ (or  $y$) coordinate. For a point $(c_x, c_y)$ in the target dataset, let $m_x$ and $m_y$ be the matched coordinates respectively, then we let $(m_x, m_y)$ as the matched point for $(c_x, c_y)$. Denote the point recovery rate of the 1-D matching on $x$ and $y$ coordinates as $r_x$ and $r_y$, then for the point recovery rate on 2-D data $r$ we have $r \leq \min \{r_x, r_y\}$.

 %\section{Simulation Results}
 %\jie{there are three weight functions below} 
 \subsection{Experimental Results for Synthetic Data}\label{sec:synthetic}
We implemented our proposed algorithms in Matlab, and all the experiments are run on a HPC cluster with 28 cores and 168GB memory. We first simulate with synthetic data, where we consider two cases: $Q$ is a subset of $P$, and $Q$ intersects with $P$. For each case, we generate $30$ different datasets, and under each dataset, we evaluate the algorithm performance with the two weight functions defined in Sec. 4.1.
%\begin{itemize}
 %   \item[(a)] $w(p_i \leftrightarrow q_j) = \min \{ w(p_i), w(q_j)\}$
%    \item[(b)] $w(p_i \leftrightarrow q_j) = M - |w(p_i) - w(q_j)|$
    %\item[(c)] $w(p_i \leftrightarrow q_j) = \min \{ w(p_i), w(q_j)\}\frac{\min \{ w(p_i), w(q_j)\}}{\max \{ w(p_i), w(q_j)\}}$
%\end{itemize}

\subsubsection{Synthetic Data, Case 1: $Q$ is a subset of $P$}
In this case, we set $P = R$, $|P| = 60$, $Q$ is obtained from $P$ with the data generation method described above, where $\beta = 0.6$, $p_{\text{bion}} = 0.7$. 
% \begin{figure}[!t]
% \centerline{
% \subfigure[]{\includegraphics[width=0.25\textwidth]{figs/case1w1_1.pdf}}
% \subfigure[]{\includegraphics[width=0.25\textwidth]{figs/case1w1_2.pdf}}}
% \hfil
% \vspace{-5pt}
% \centerline{
% \subfigure[]{\includegraphics[width=0.25\textwidth]{figs/case1w2_1.pdf}}
% \subfigure[]{\includegraphics[width=0.25\textwidth]{figs/case1w2_2.pdf}}}
% \hfil
% \vspace{-5pt}
% \caption{Recovery rate of each alg. under case 1 with different weight functions: (a)-(b): weight function (a);(c)-(d): weight function (b)} 
% \label{case1}
% \vspace{-10pt}
% \end{figure}
Results are shown in Fig. \ref{fig:synData} (a)-(b), where ``mix'' means we find both increasing and decreasing  monotonic matchings, and iteratively select the better one. We can see that the weight function 1 and 2 result in similar performance for all the 2-D algorithms, except that solution to ILP is the best (but also takes much more time, as shown in the right y axis). In this case the min-conflict algorithm outperforms all the other monotone sequences based algorithms under both weight functions. This is because the datasets are pretty similar.
% From the results in Fig. \ref{fig:synData} (a)-(b), we can see that the weight function 1 and 2 result in similar performance for all the 2-D algorithms, except that solution to ILP is the best (but also takes much more time, as shown in the right y axis). In this case the min-conflict algorithm outperforms all the other monotone sequences based algorithms under both weight functions. This is because the datasets are pretty similar.
In addition, our 2-D matching algorithms significantly outperform the extended 1-D algorithm (about 10\%) since our algorithms take into account the order of $x$ and $y$ axis simultaneously. 
\begin{figure}[!t]
	\begin{minipage}[b]{0.24\textwidth}
		\centering
		\includegraphics[width=\textwidth]{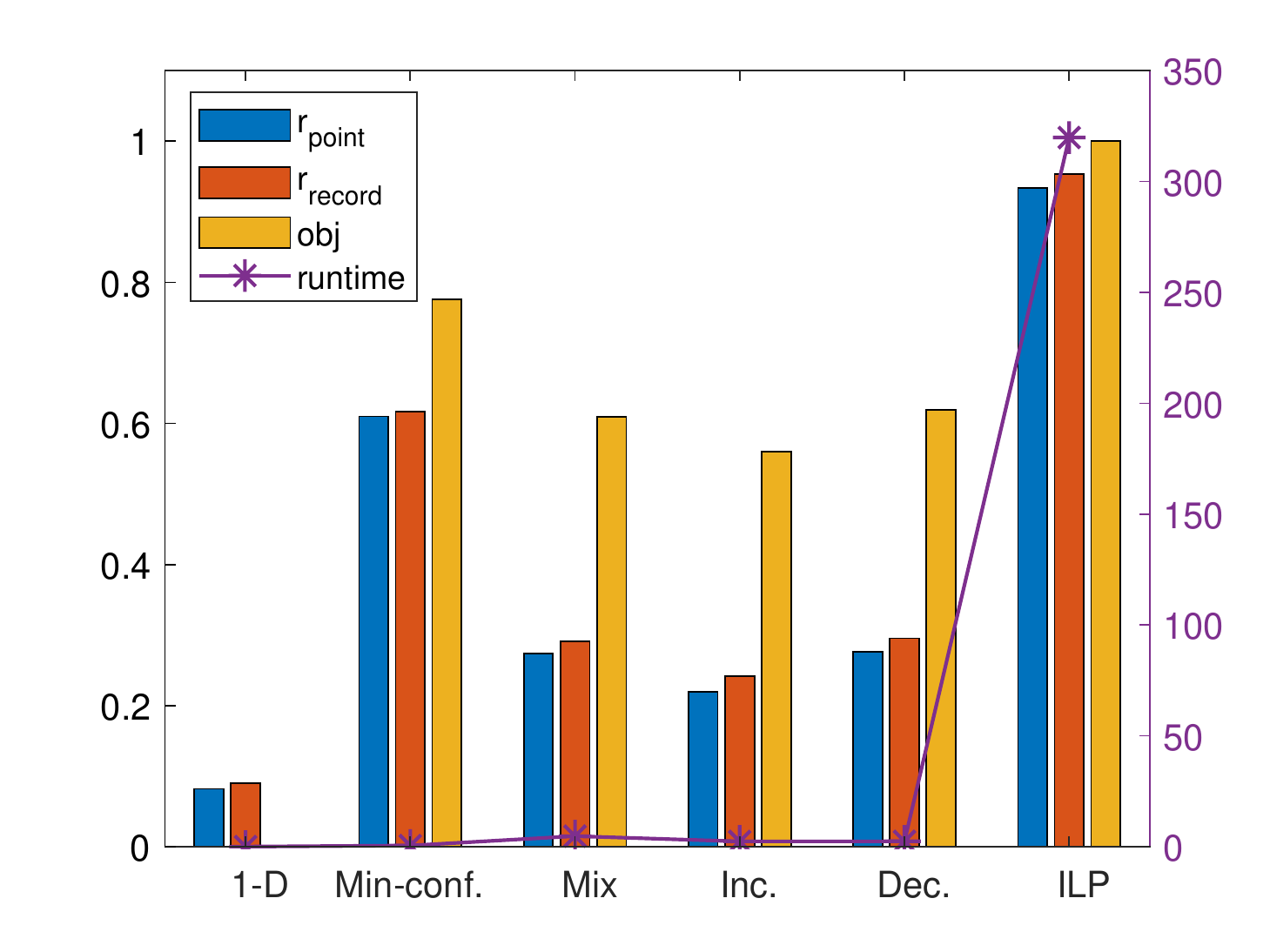}
		\\ \vspace{-1mm}
		{(a) Case 1, weight function 1}
	\end{minipage}
	\begin{minipage}[b]{0.24\textwidth}
		\centering
		\includegraphics[width=\textwidth]{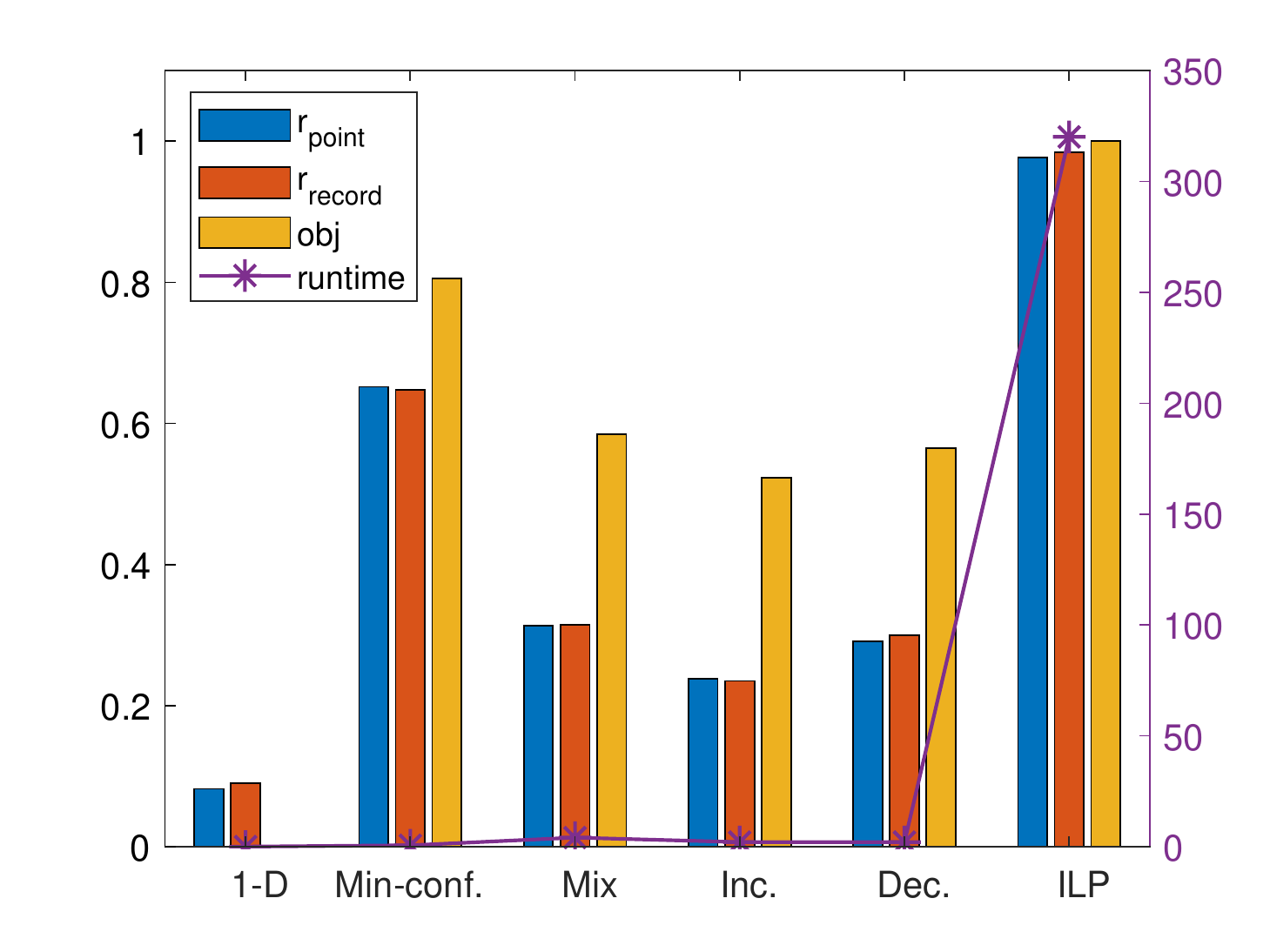}
		\\ \vspace{-1mm}
		{(b) Case 1, weight function 2}
	\end{minipage}
	\hfil
	\begin{minipage}[b]{0.24\textwidth}
		\centering
		\includegraphics[width=\textwidth]{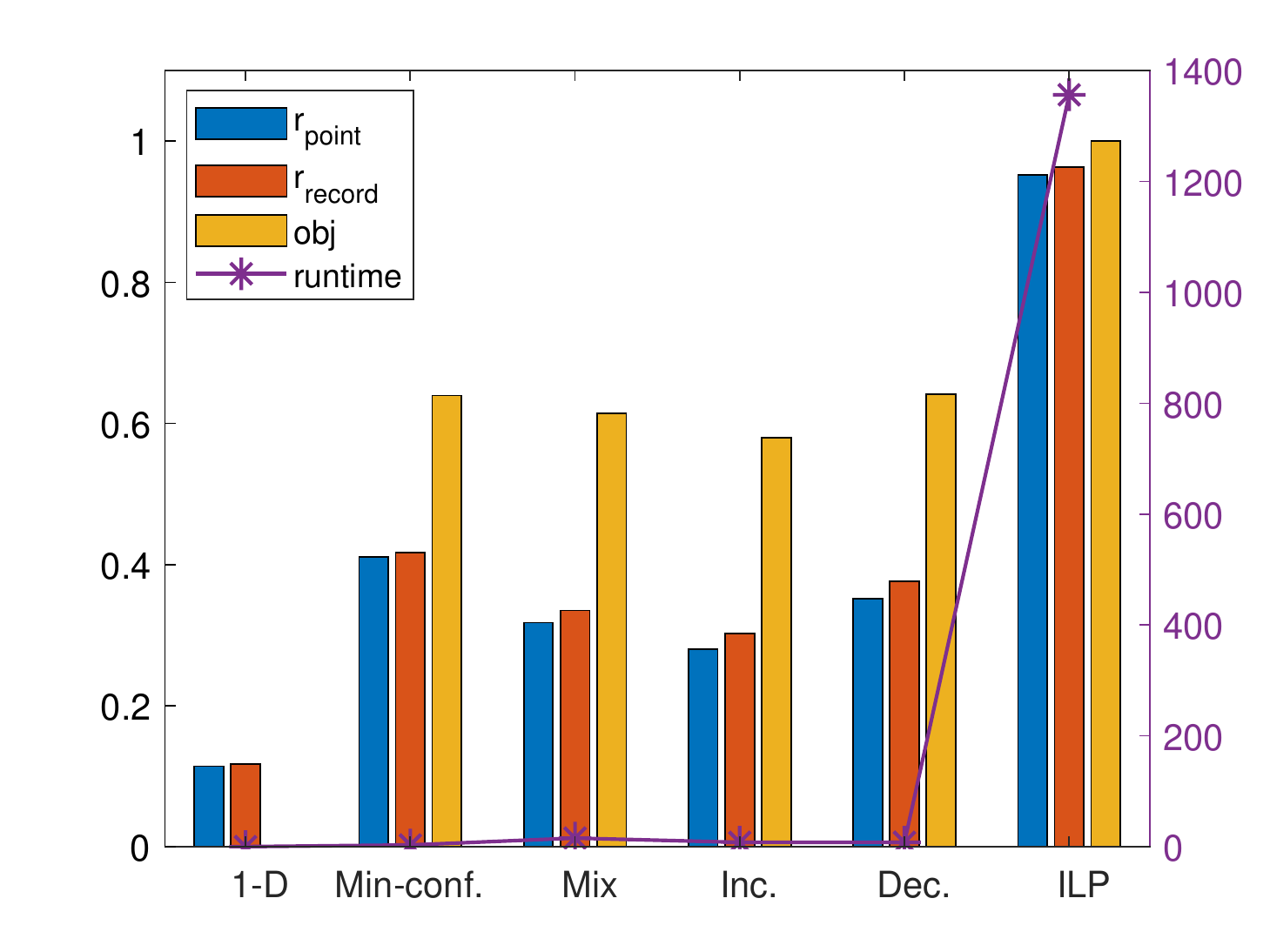}
		\\ \vspace{-1mm}
		{(c) Case 2, weight function 1}
	\end{minipage}
	\begin{minipage}[b]{0.24\textwidth}
		\centering
		\includegraphics[width=\textwidth]{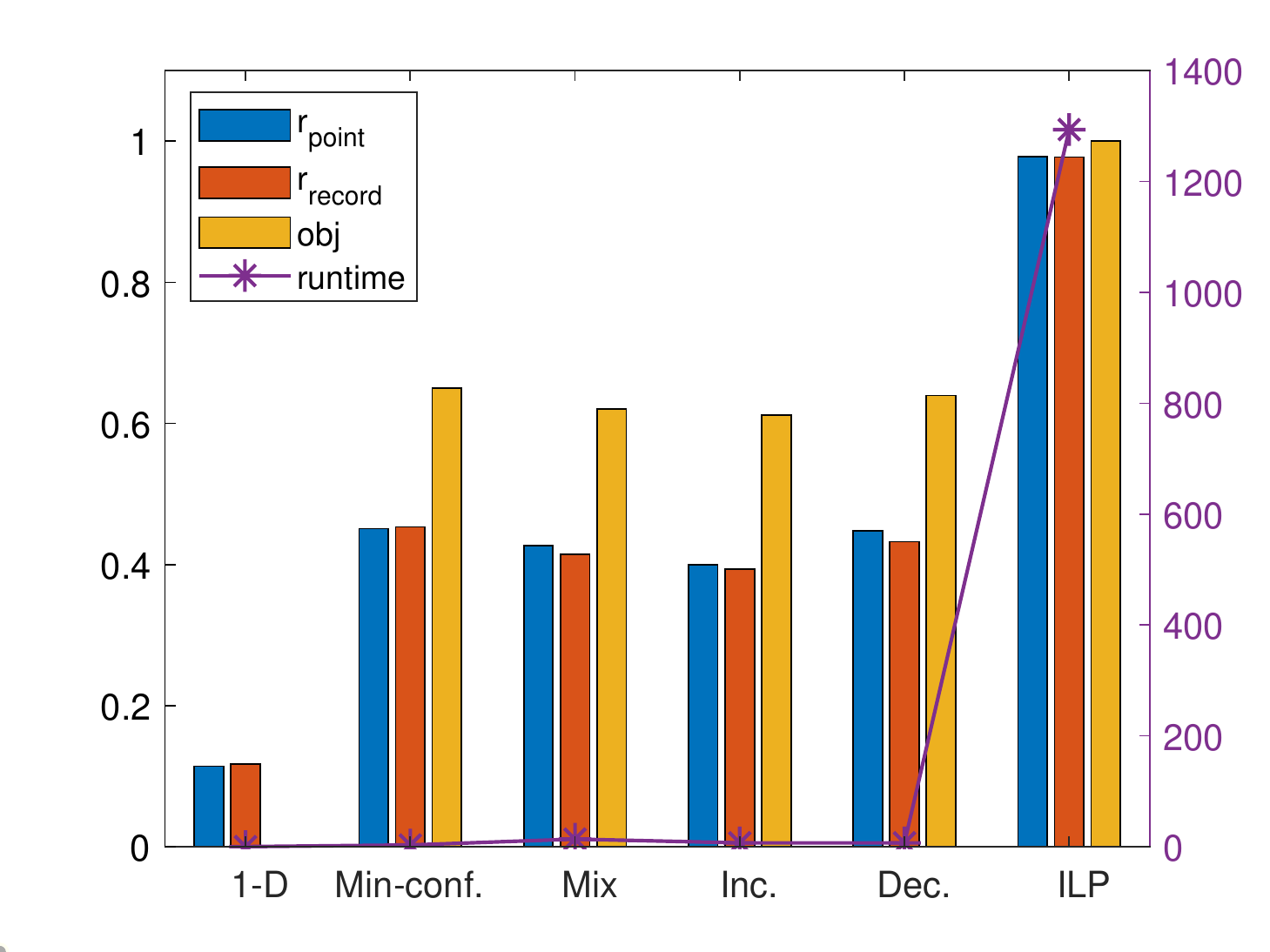}
		\\ \vspace{-1mm}
		{(d) Case 2, weight function 2}
	\end{minipage}
	\caption{Synthetic data: point recovery rate ($r_{\text{point}}$), record recovery rate ($r_{\text{record}}$) and objective (bar, left y axis), and running time in seconds (line, right y axis).}
	\label{fig:synData}	
\end{figure}

\subsubsection{Synthetic Data, Case 2: $P$ intersects with $Q$}
In this case, we set $|R| = 100$, and both $P$ and $Q$ are obtained from $R$ with the data generation method above, where $\beta$ and $p_{\text{bion}}$ are same as case 1. Besides,   about $60\%$ of points in $P$ and $Q$ are common. The average normalized recovery rates (point or record) of each algorithm among 30 runs are shown in Fig. \ref{fig:synData} (c)-(d). Still, both weight functions have similar performance.
% And fortunately, the monotone matching with mixed and increasing sequences outperforms the min-conflict algorithm.
And also the monotone-sequence-based algorithm and the min-conflict algorithm have comparable performance.
In addition, the normalized objectives are still close to that of the ILP solution ($>60$\%). Compared with case 1, in case 2 the normalized recovery rates are slightly lower for all the algorithms, since the two datasets are less similar and are more noisy.
 
%  \begin{figure}[!t]%{width=8in}
% \centerline{
% \subfigure[]{\includegraphics[width=0.25\textwidth]{figs/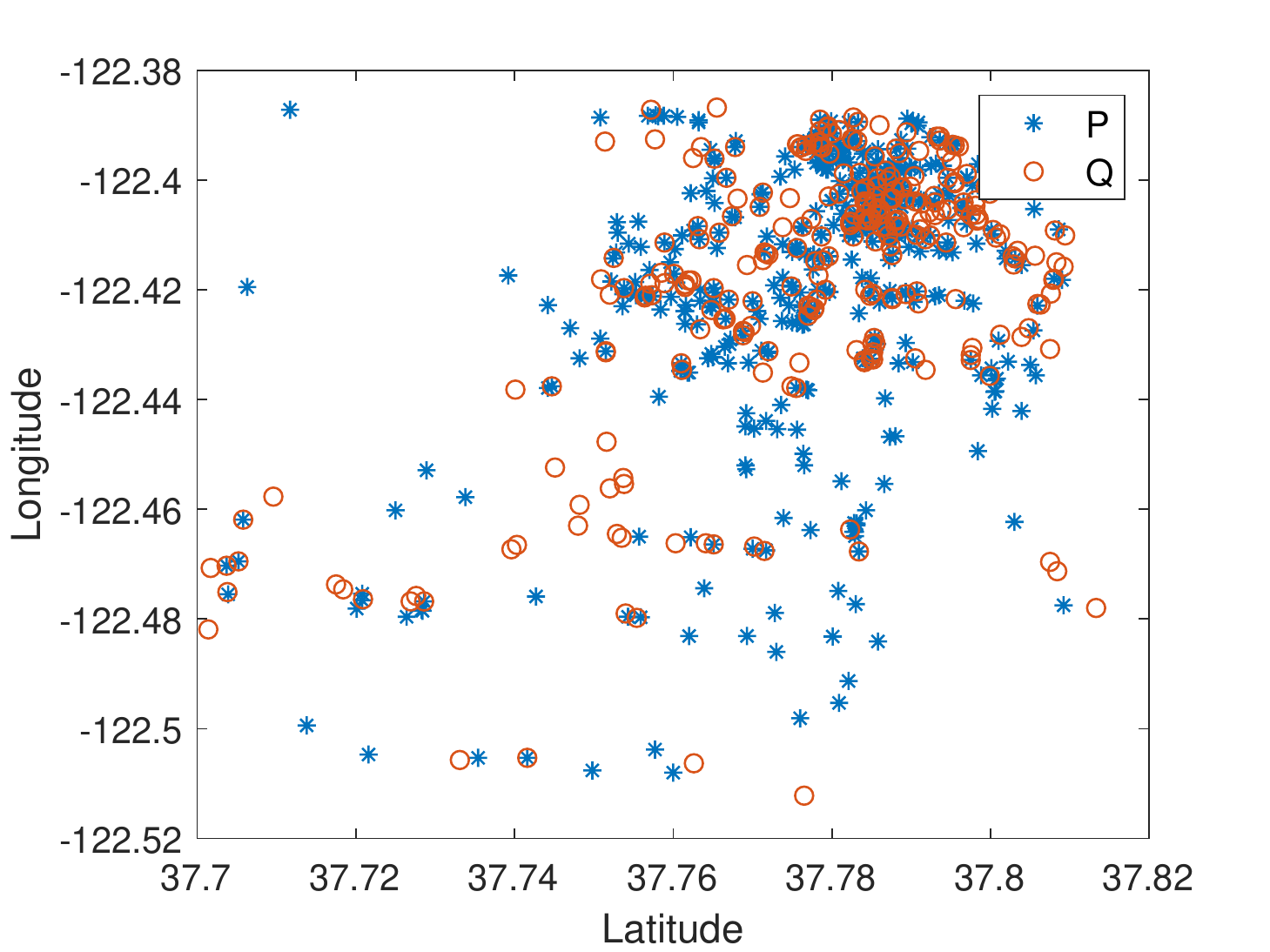}}
% \subfigure[]{\includegraphics[width=0.25\textwidth]{figs/rstofreal.pdf}}}
% \hfil
% \vspace{-15pt}
% \caption{(a): distribution of locations in real-world datasets; (b) Recovery rates for extended 1-D and 2-D monotone (increasing) matching algorithms. } 
% \label{reald}
% \vspace{-10pt}
% \end{figure}

\subsection{Experimental Results on Real-world Datasets}
We also evaluate our matching algorithms on the real-world datasets described at the beginning of this section. As the results in Sec. \ref{sec:synthetic} show that the two weight functions have comparable performance, we only apply the weight function \eqref{weight2} on real-world datasets. For the monotone based greedy algorithms, we use either increasing or decreasing order, but only show the best results among them and refer it as the monotone solution.

\subsubsection{Location-based Brightkite Check-ins}
The recovery rate of point and record with different matching algorithms are shown in Fig. \ref{fig:reald2} (a), we can see that the finer the granularity is, the lower the recovery rate we get. And when we have a coarse granularity, e.g. when 3 digits reserved, the points on 1-D are much denser, and also the frequency distributions are more similar (as shown in Table \ref{table:reald2}), so the extended 1-D algorithm outperforms our 2-D algorithms. However, as the granularity becomes finer, the points are sparser and the frequencies are less similar, then the monotone matching algorithm performs the best. Note  the lower recovery rate for finer granularity. This is because the algorithm relies on the   frequency difference between two points, and   both the ciphertext and plaintext's frequency distributions become more    homogeneous under higher granularity, which increases the ambiguity in weighted matching (inclined to be unweighted). The solution to ILP should give us the actual optimal result, but unfortunately, since there are too many constraints in ILP,   we cannot get its solution   in time.

\begin{figure}[!t]
	\begin{minipage}[b]{0.24\textwidth}
		\centering
		\includegraphics[width=\textwidth]{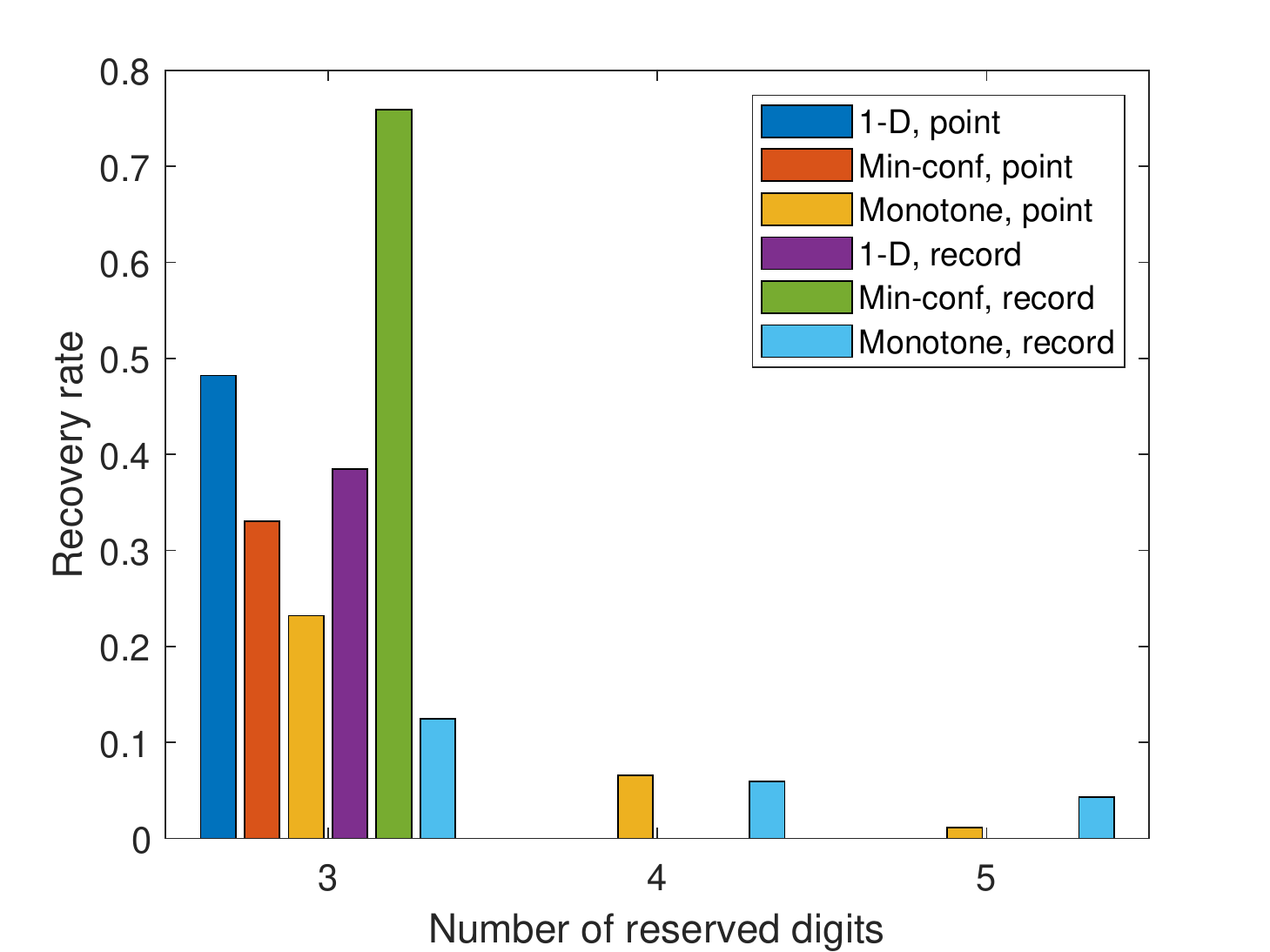}
		\\ \vspace{-1mm}
		{(a)}
	\end{minipage}
	\begin{minipage}[b]{0.24\textwidth}
		\centering
		\includegraphics[width=\textwidth]{pointsforreal.pdf}
		\\ \vspace{-1mm}
		{(b)}
	\end{minipage}
	\\ \vspace{-5mm}
	\caption{Brightkite: (a) point and record recovery rate; (b) distribution of locations when reserve 4 digits}
	\label{fig:reald2}	
	\vspace{-5mm}
\end{figure}

\begin{table}[!htb]
\caption{Effective data density and Hellinger distance for Brightkite}
\centering
\scalebox{0.8}{
\begin{tabular}{|c|*{7}{c|}}
\hline
\multirow{2}{*}{Reserved digits} &\multicolumn{3}{c|}{Effective data density} &\multicolumn{3}{c|}{Hellinger distance} \\
\cline{2-7}
  & $(x,y)$  & $x$ & $y$ & $(x,y)$ & $x$ & $y$\\
\hline
3 & 0.2807 & 0.8125 & 0.7283 & 0.5885 & 0.1786 & 0.2228 \\
\hline
4 & 0.1869 & 0.4595 & 0.4053 & 0.6393 & 0.6837 & 0.4773\\
\hline
5 & 0.1740 & 0.2120 & 0.2 & 0.6493 & 0.6214 & 0.6340\\
\hline
\end{tabular}}
\label{table:reald2}
\vspace{-10pt}
\end{table}

\subsubsection{City and Town Population} 
The matching results are given in Table \ref{table:reald1}, where the monotone matching yields a slightly lower recovery rate than the min-conflict algorithm, but better performance than the extended 1-D algorithm. Similarly, we compute the Hellinger distance (frequency similarity) and effective data density. It turns out that these two values on 2-D and 1-D data are the same, which are 0.5241 and 0.5751 respectively. This is because all the $x$ and $y$ coordinates are unique, so as to the $(x,y)$ data. Hence, in this case for the extended 1-D algorithm and 2-D algorithms, the effects of the data density and frequency distribution are the same, the recovery rate here is decided by the advantage of the algorithm itself. The 2-D algorithms outperform the extended 1-D algorithm on inference against 2-D OPE databases because we jointly consider the data orders on $x$ and $y$ coordinates. And we use an extreme case in Fig. \ref{fig:2D2D} to show the benefits of considering data orders in 2D.

\begin{table}[!htb]
\caption{Results for city and town population dataset}
\centering
\scalebox{0.8}{
\begin{tabular}{|c|*{5}{c|}}
\hline
\multirow{2}{*}{Algorithm}&\multicolumn{2}{c|}{Recovery rate} & \multirow{ 2}{*}{Runtime (s)} \\
\cline{2-3}
& Point  & Record & \\
\hline
1-D & 0 & 0 & 11.8830 \\
\hline
Min-conf. & 0.3778 & 0.3653 & 264.0576 \\
\hline
Monotone & 0.25 & 0.2631 & 309.6281\\
\hline
\end{tabular}}
\label{table:reald1}
\vspace{-10pt}
\end{table}

\subsubsection{Patient Discharge Data by Principal Diagnosis} We show the results of patient discharge data in Table \ref{table:reald3_1}, where  the monotone matching yields  very high recovery rates. But it is surprising that the min-conflict algorithm yields zero recovery rates, even worse than the extended 1-D algorithm. First we observe that for the adopted dataset, the set of $y$ coordinates of the auxiliary and target datasets are the same, which correspond to 19 principal diagnosis groups (hence the data density on $y$-axis is 1 in Table \ref{table:reald3_2}), and this results in an exact matching on $y$-axis for all the algorithms. Then the recovery rate is decided by the matching on $x$-axis, which means the 2-D order cannot benefit 2-D algorithms much. Hence in this case the noisy data frequency has a greater impact on the min-conflict algorithm, which leads  to poor performance.

\begin{table}[!htb]
\caption{Matching results for patient discharge data}
\centering
\scalebox{0.8}{
\begin{tabular}{|c|*{5}{c|}}
\hline
\multirow{2}{*}{Reserved digits}&\multicolumn{2}{c|}{Recovery rate} & \multirow{ 2}{*}{Runtime (s)} \\
\cline{2-3}
  & Point  & Record & \\
\hline
1-D & 0.0684 & 0.0290 & 0.249 \\
\hline
Min-conf. & 0 & 0 & 517.9117\\
\hline
Monotone & 0.8441 & 0.8995 & 196.3670\\
\hline
\end{tabular}}
\label{table:reald3_1}
\vspace{-15pt}
\end{table}

\begin{table}[!htb]
\caption{Effective data density and Hellinger distance for patient discharge data}
%\vspace{-10pt}
\centering
\scalebox{0.8}{
\begin{tabular}{|c|*{6}{c|}}
\hline
\multicolumn{3}{|c|}{Effective data density} &\multicolumn{3}{c|}{Hellinger distance} \\
\hline
$(x,y)$  & $x$ & $y$ & $(x,y)$ & $x$ & $y$\\
\hline
0.6188 & 0.6667 & 1 & 0.3798 & 0.1665 & 0.0960 \\
\hline
\end{tabular}}
\label{table:reald3_2}
%\vspace{-5pt}
\end{table}

% \subsubsection{Real-world Datasets}
% We also run the matching algorithms on the real-world datasets described above. The results are given in Fig. \ref{reald}. The min-conflict algorithm cannot finish running within 24 hours on the HPC. We only have results for the monotone matching (increasing sequence) and the extended 1-D algorithm. The former takes less than 6 hours to complete (Matlab), while the latter less than 0.1 seconds (a Python implementation). The  ILP algorithm cannot finish in time. In this case the monotone matching yields a lower recovery rate than the extended 1-D algorithm. This is because the data is actually   ``dense'' in each single dimension but ``sparse'' in 2-D, since in $x$-dimension the domain has 144 values and there are 120 points after projection to $x$. However, the location resolution is low due to a large grid size, which means even if the same grid is recovered correctly, the privacy loss is not significant. If we decrease the grid size (higher granularity),  the data will become more sparse in each dimension and the extended 1-D matching will perform worse than our 2-D algorithm, as indicated in the synthetic datasets (however the run time becomes much longer so we did not have time to finish it). The performance comparison also depends on the data distribution and correlation between the data frequency of each dimension, which is part of our ongoing study.
 
%*******************************Conclusion******************************

\vspace{-15pt}
\section{Conclusion and Future Work}

%\jie{this section is still missing}

\begin{figure}[!t]
\begin{center}
 \includegraphics[scale=0.45]{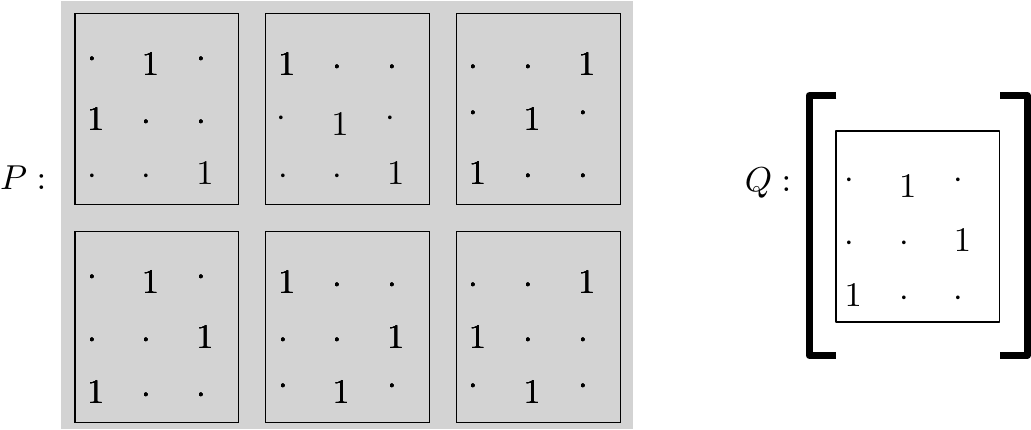} 
 \caption{\sl 1D-matching of \cite{GSBNR17} vs.\ 2D-matching.  Most points in $P$ have frequency $=0$, or do not appear in $Q$. These elements are represented by dots. Each block in $P$ is associated with a unique permutation of $(1, 2, 3 ).$  In $Q$, only one of these 6 blocks appear. Obviously the 2-D data uniquely indicates which blocks of $P$ is isomorphic to $Q$, and (if only blocks are considered), this block is revealed   as the  unique optimal solution to the  order-preserving matching.  On the other  hand, the sum of each row and column are identical, so no meaningful 1-D matching is possible. By replacing $3$ by an arbitrary $d$, we indicate that no less than $d!$ false matching are possible.  
 \label{fig:2D2D}}
\end{center}
\vspace{-15pt}
\end{figure}
In this paper, we studied the problem of  inferring  data from OPE-encrypted databases by jointly considering the multi-dimensional   order and   frequency of data tuples. 
We formulate it as a multi-dimensional matching problem, prove the NP-hardness, and propose   efficient  algorithms to solve it. Our algorithms exploit the geometric structure of the problem. 
% We show that the monotone matching algorithm could be obtained in asymptotic time that is comparable to the $\Theta(n^2)$ time 1-dimensional algorithm. A simpler greedy algorithm is also provided.  
We show that the monotone matching could be obtained in asymptotic time $O(n^{2.5}\log ^3n)$ that is comparable to the $O(n^2)$ 1-D algorithm. A simpler greedy algorithm is also provided. Experimental results on synthetic and real-world  datasets show that our algorithms perform  better than the extended 1-D algorithm when the data is sparse. In addition, the performance gain depends on data density and similarity of frequency distributions, but 2-D matching is more robust to noise (or dissimilarities between the frequency distributions). Our algorithms shed more light on the   security evaluation of OPE encrypted databases.

\vspace{2mm}
\noindent{\bf Extensions and Future Directions:}  Our experiments indicate that 1-D matching algorithm is likely to {\sl cluster} the data into meaningful regions, and  match region in an order-preserving way, using the aggregated data in each region. We think clustering could average/smooth the impact of noise, downsize the size of the problem, and circumvent issues when the datasets use different scale of coordinates, hence we would like to combine clustering with monotone matching algorithms. Clustering the  plaintext $P$ could be obtained based on  geographic/geometric vicinity in a straightforward way, thus we assume it is given in the rest for simplicity. However, finding corresponding regions in the ciphertext $Q$ is not obvious, as distances are distorted. Furthermore, finding these clusters cannot be decoupled from the matching process. We  have  obtained some preliminary results and we briefly describe them here. For any  two rectangles $R_1, R_2$ the cost of matching the point of $P$ inside $R_1$ to  points of $Q$ inside $R_2$ depends on the difference between their corresponding sum of frequencies in each rectangle.
Our  optimization function is to simultaneously obtain a clustering of $Q$ and an  one-to-one order-preserving matching  to $P$'s cluster. The 1-D problem can easily be solved in time  $O(n^2),$ using dynamic programming. This leads to an $O(n^5)$ algorithm for the 2-D problem for matching monotonic set of points in $P$ to a weakly-monotonic set of rectangles in $Q$.  We omit details due to lack of space.  Testing the effectiveness of these approaches is left for future work.

\bibliographystyle{IEEEtran}
\bibliography{refpaper}

\end{document}